\newtheorem{theorem}{Theorem}
\begin{document}

\title{FLARE: A New Federated Learning Framework with Adjustable Learning Rates over Resource-Constrained Wireless Networks}

\author{Bingnan Xiao,
        Jingjing~Zhang,~\IEEEmembership{Member,~IEEE}, 
        Wei~Ni,~\IEEEmembership{Fellow,~IEEE}
        and~Xin~Wang,~\IEEEmembership{Fellow,~IEEE}
\thanks{Bingnan Xiao, Jingjing Zhang, and Xin Wang are with the Department of Communication Science and Engineering, 
School of Information Science and Technology, Fudan University, 
Shanghai 200433, China (e-mail: 22110720061@m.fudan.edu.cn, \{jingjingzhang, xwang11\}@fudan.edu.cn).

Wei Ni is with Data61, CSIRO, Marsfield, NSW 2122, Australia, and the School of Computing Science and Engineering, the University of New South Wales (UNSW), Kennington, NSW 2052, Australia (e-mail:
Wei.Ni@data61.csiro.au).
}}

\maketitle

\begin{abstract}
Wireless federated learning (WFL) suffers from heterogeneity prevailing in the data distributions, computing powers, and channel conditions of participating devices. 
This paper presents a new \textbf{F}ederated \textbf{L}earning with \textbf{A}djusted
lea\textbf{R}ning rat\textbf{E} (FLARE) framework to mitigate the impact of the heterogeneity. 
The key idea is to allow the participating devices to adjust their individual learning rates and local training iterations, adapting to their instantaneous computing powers. 
The convergence upper bound of FLARE is established rigorously under a general setting with non-convex models in the presence of non-i.i.d. datasets and imbalanced computing powers. 
By minimizing the upper bound, we further optimize the scheduling of FLARE to exploit the channel heterogeneity. A nested problem structure is revealed to facilitate iteratively allocating the bandwidth with binary search and selecting devices with a new greedy method. A linear problem structure is also identified and a low-complexity linear programming scheduling policy is designed when training models have large Lipschitz constants.
Experiments demonstrate that FLARE consistently outperforms the baselines in test accuracy, and converges much faster with the proposed scheduling policy.
\end{abstract}

\begin{IEEEkeywords}
Federated learning, wireless networks, device scheduling, resource allocation, and convergence analysis.
\end{IEEEkeywords}

\section{Introduction}

The upcoming sixth-generation (6G) wireless communication systems will be characterized by pervasive connectivity. The volume of data generated by edge devices—spanning smartphones, wireless sensors, and wearable devices is expected to experience exponential growth \cite{9349624}, \cite{9446488}. This surge in data production will catalyst widespread adoption and proliferation of artificial intelligence (AI)~\cite{8648462}, and give rise to new federated learning (FL) techniques~\cite{konečný2018federated}. 
In a standard FL algorithm, namely FedAvg~\cite{fedavg}, a fixed number $\tau$ of local stochastic gradient descent (SGD) updates are performed at a selected set of devices in each training round before the updated parameters are sent to a server, e.g., a base station (BS), for aggregation. This process repeats until an adequate global model is obtained.

As an extension of FL, wireless FL (WFL) supports data processing and model training in wireless networks \cite{9141214}, and has been applied to, e.g., online path control for massive unmanned aerial vehicle (UAV) networks \cite{9169921}, distributed localization \cite{9250516}, and content recommendations for mobile devices~\cite{ammad2019federated}.
While FedAvg has demonstrated good empirical performance \cite{nilsson2018performance}, it requires some desirable conditions as a prerequisite, which is hardly possible in real-world wireless networks. 
    {One challenge for directly applying FedAvg in WFL is
    non-independent and identically distributed (non-i.i.d.) datasets among devices, also known as data heterogeneity.} The data heterogeneity tends to induce significant performance degradation \cite{li2019convergence}.
    Another challenge is that most FL frameworks only allow the participants to 
    perform the same number of local SGD iterations per communication round at a consistent training speed. 
    However, various edge devices may differ substantially in hardware specifications (e.g., CPU, memory, battery, etc.), 
    resulting in different training speeds and local training iteration numbers \cite{li2020federated}. 
    Moreover, the transmission capability, such as transmit power, varies among devices in many wireless systems.  
    FL algorithms, typically developed under the assumption of random uniform device selection, may not be suitable, especially when the system has a limited bandwidth. Devices with poor channel conditions can be prevented from contributing to FL training \cite{9579038}.

\vspace{-3mm}
\subsection{Related Work}
Recent studies have devoted significant attention to the heterogeneity of FL.
As for data heterogeneity,
the authors of \cite{zhao2018federated} enhanced the training performance by assigning a globally shared dataset across edge devices. 
In~\cite{scaffold}, a variance reduction algorithm called Scaffold was developed with local correction terms appended to mitigate the gradient drift.
The authors of \cite{li2021model} utilized contrastive learning at a model level to rectify local updates.
Yet, these works overlooked the device heterogeneity by assuming a consistent training speed across devices. Moreover, data sharing \cite{zhao2018federated} and correction terms \cite{scaffold} would increase CPU and memory burden, making it hard to deploy FL at low-power edge devices.

To mitigate device heterogeneity, 
FedProx was proposed in \cite{fedprox} by adding a proximal term to local training, where the weight of the proximal term needs to be carefully tuned for different training tasks. 
It was shown in \cite{fednova} that device heterogeneity could cause objective inconsistency. Then, FedNova was designed to alleviate such adverse impacts to some extent.
In \cite{fedlin}, FedLin was proposed with correction terms similar to stochastic variance reduction gradient (SVRG); yet, the additional gradient computation and transmission was non-negligible. In general, the above works \cite{fednova}, \cite{fedlin} relied on full or uniformly random device selection, and their convergence under arbitrary or unbalanced device selection cannot be guaranteed. This would hinder their application in wireless edge networks, where devices often need to be scheduled based on their instantaneous or statistical channel conditions, as well as computational and communication abilities.

To tackle channel heterogeneity, efforts have been directed toward achieving efficient aggregation and resource scheduling.
Recently, a new analog aggregation technique named over-the-air computation (OTA) has been utilized in FL to alleviate the bandwidth crunch from orthogonal transmission by exploiting the feature of multiple-access channel superposition \cite{XIAO2024}.
In \cite{our}, an optimal channel-adaptive power control strategy for OTA-FL was investigated.
For digital schemes, the authors of \cite{8737464} minimized the weighted sum of computation delay and energy consumption in each communication round under full device participation. In \cite{8761315}, a greedy strategy was studied to schedule as many devices as possible per round. Neither of these policies provided provable convergence. 
In the absence of communication resource constraints, the authors of \cite{8851249} compared three scheduling policies: random scheduling, round-robin, and proportional fairness. A heuristic scheduling policy was designed in \cite{9337227} with channel and update importance jointly considered. 
In \cite{10041216}, a channel- and data-aware scheduling strategy was developed under asynchronous aggregation, where only coarse-grained (e.g., evenly distributed) bandwidth allocation was considered.
Joint device scheduling and resource allocation were studied in \cite{9207871} to improve the trade-off between training rounds and participating devices.
By assuming full gradient descent (GD) at the devices, a joint optimization problem for device selection and bandwidth allocation was considered in \cite{9210812}. By contrast, gradient divergence was used in \cite{9170917} to define the importance of updates and develop a scheduling strategy based on such importance and channel equality.
Considering data and device heterogeneity, the authors of \cite{9796935} optimized the sampling strategy in the ideal lossless transmission environments.

\subsection{Contribution and Organization}

In this paper, we design a new scheduling framework to address the device, data and channel heterogeneity in WFL, where the learning rates of participating devices can be adjusted online, adapting to the instantaneous availability of their computing power.
A new device selection and bandwidth allocation strategy is developed to facilitate the convergence of WFL under the new framework.
The main contributions of the paper are summarized as follows.
    \begin{itemize}
        \item We propose a new framework, \textbf{F}ederated \textbf{L}earning with \textbf{A}djusted lea\textbf{R}ning rat\textbf{E} (FLARE), to address device heterogeneity in WFL. By adaptively adjusting the learning rates of the participating devices, FLARE allows for consistent training progress among the participating devices with substantially different computing powers, hence accelerating model convergence.

        \item Under FLARE, we develop a general convergence analysis of WFL with non-convex models in the presence of data and device heterogeneity, and arbitrary device scheduling strategy.

        \item Given the convergence upper bound, we design a new WFL scheduling policy under the FLARE framework, when the communication bandwidth is limited. To tackle the non-convexity of the convergence bound and the coupling of device selection and bandwidth allocation, we reveal a nested problem structure and decouple the problem to iteratively allocate the bandwidth with binary search and select devices with a new greedy strategy.

        \item We further reveal a linear problem structure of device selection and bandwidth allocation for models with large Lipschitz constants. A low-complexity linear programming based policy is designed accordingly.

    \end{itemize} 
Experiments demonstrate the effectiveness of FLARE in accelerating model convergence under different data distributions and scheduling strategies. 
Under the FLARE framework, the proposed scheduling policy outperforms the state-of-the-art strategies even without the learning rate adjustments and exhibits robustness under various system parameter settings.

The rest of this paper is organized as follows. Section~II presents the system model. Section III elaborates on the FLARE framework tailored to address device heterogeneity, analyzes the communication of FLARE under data heterogeneity, and formulates the problem to facilitate convergence. Section IV reformulates the problem and presents efficient algorithms.
Numerical results are provided in Section V, followed by conclusions in Section VI.

\textit{Notation:}
Calligraphic letters represent sets. Boldface lowercase letters indicate vectors. \(|\mathcal{A}|\) denotes the cardinality of set \(\mathcal{A}\), and $\setminus$ denotes the set difference operation. \(\mathbb{R}\) denotes the real number field. \(\mathbb{E}[\cdot]\) denotes statistical expectation. \(\nabla\) stands for gradient. 
\( \|\boldsymbol{\rm{w}}\|_2 \) is the \(\ell_2\)-norm of vector \(\boldsymbol{\rm{w}}\), \(\langle \boldsymbol{\rm{a}},\boldsymbol{\rm{b}} \rangle \) represents the inner product operation, and $\mathrm{mean} \{ \cdot \}$ denotes a mean-value function.

\section{System Model}
In this section, we elucidate the system model, where the participating devices can have unbalanced data, computing, and communication abilities.
\subsection{Federated Learning Model}
The considered WFL system comprises a BS and $K$ edge devices collected by $\mathcal{K} \buildrel \Delta \over =  \{1,2,\cdots, K \}$. The objective of WFL is to minimize the empirical loss function on a given dataset, as given by 
\begin{equation} \label{loss func}
 F(\boldsymbol{\rm{w}}) := \sum_{i=1}^{K}c_if_i(\bf{w}), 
\end{equation}
where ${\bf{w}} \in \mathbb{R}^{d}$ is the $d$-dimensional model parameter; $f_i(\boldsymbol{\rm{w}}) \buildrel \Delta \over = \mathbb{E}_{\xi_i \sim \mathcal{D}_i}\left[f_i(\boldsymbol{\rm{w}},\xi_i)\right]$ represents the local loss function concerning the local dataset $\mathcal{D}_i$ of device $i$; and $c_i$ denotes the aggregation weight of device $i$, satisfying $\sum_{i=1}^{K}c_i=1$.

In this paper, we study a generic scenario where heterogeneity prevails in the data, computing and communication conditions of devices.
Specifically, the data may not be i.i.d. among the devices; the devices may have substantially different computing powers and hence train their local models at different paces; and the communication from the devices to the BS is constrained by limited bandwidth and undergoes non-negligible latency.
Define a binary vector $\boldsymbol{a}:=[a_{r,1},\cdots,a_{r,K}]^{\top}$, where $a_{r,i} \in \{0,1\} $ indicates whether device $i$ is scheduled ($a_{r,i}=1$) or not ($a_{r,i}=0$) in round $r$.

At each training round $r$, the following steps are executed, as illustrated in Fig. \ref{fig:update_flow}:
\begin{itemize}
    \item Device scheduling and downlink transmission:
    The BS selects a subset of devices, denoted by $\mathcal{M}_r$, with $M_r$ devices, and multicasts the global parameter $\boldsymbol{\rm{w}}_r$ to every selected device $i \in \mathcal{M}_r$. 

    \item Local update: By setting $\boldsymbol{\rm{w}}_{r,i}^{0}=\boldsymbol{\rm{w}}_r$, each selected device $i$ begins $\tau_{r,i}$ local updates according to its local computational capacity. The updating rule for SGD is 
    \begin{equation}
    \boldsymbol{\rm{w}}_{r,i}^{j+1}  = \boldsymbol{\rm{w}}_{r,i}^{j} - \eta_{\mathrm{l}} g_{r,i}^{j}, \,\,\, j=0,1,\cdots,\tau_{r,i}-1,
    \end{equation}
    where $\eta_{\mathrm{l}}$ denotes the local learning rate of the devices, and $g_{r,i}^j = \nabla f_i(\boldsymbol{\rm{w}}_{r,i}^j,\xi_{r,i} )$ is the stochastic gradient of device $i$ with respect to the mini-batch  $\xi_{r,i}$ with size $D$, sampled from the local dataset $\mathcal{D}_i$. 

    \item Uploading and aggregation: After local computation, each selected device $i$ uploads its cumulative local gradient $\Delta_{r,i} = \boldsymbol{\rm{w}}_{r,i}^{\tau_{r,i}}-\boldsymbol{\rm{w}}_{r,i}^{0}$. With the global learning rate $\eta_{\rm g}$, the BS updates the global parameter 
    $\boldsymbol{\rm{w}}_{r+1}$, as follows.
    \begin{align}
    \boldsymbol{\rm{w}}_{r+1}  &=\boldsymbol{\rm{w}}_r + \eta_{\rm g}\sum_{i \in \mathcal{M}_r }c_i\Delta_{r,i}  \nonumber \\
    &= \boldsymbol{\rm{w}}_r - \eta_{\rm g} \sum_{i \in \mathcal{M}_r} \frac{1}{M_r} \sum_{j=0}^{\tau_{r,i}-1}\eta_{\mathrm{l}} g_{r,i}^j.
    \end{align}

\end{itemize}
Then, the next $(r+1)$-th training round starts. This repeats until the prespecified termination criteria are met.

\begin{figure}[!t]
	\centering
	\includegraphics[width=0.9\linewidth]{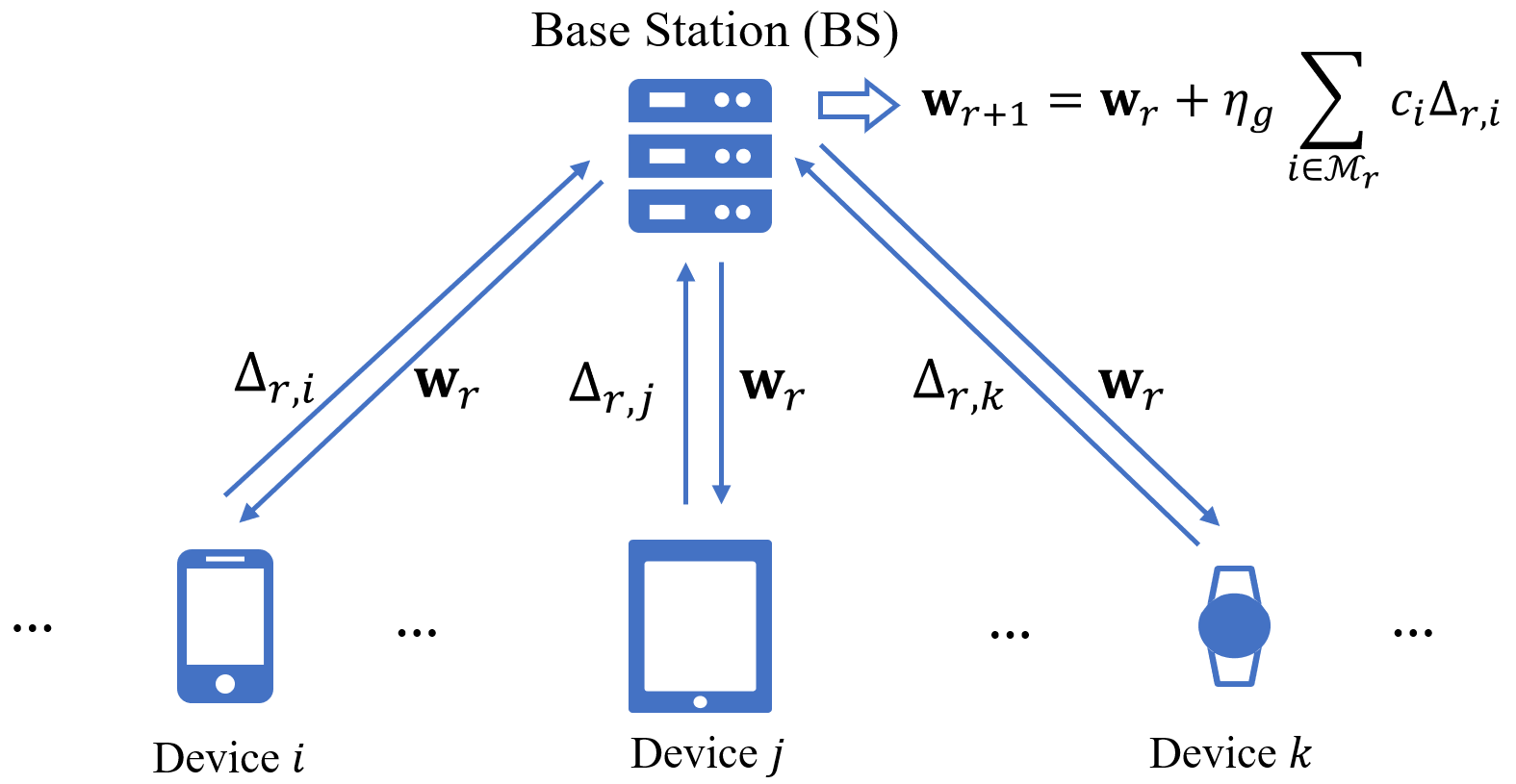}
	\caption{The architecture of a WFL system with $M_r$ selected devices at round~$r$.}
	\label{fig:update_flow}
\end{figure}

\subsection{Local Computation-Communication Latency Model}
For each round $r$, the total latency $t_{r,i} = t_{r,i}^{\mathrm{comp}} + t_{r,i}^{\mathrm{comm}}$ undergone by each selected device $i$ consists of the local computation delay $t_{r,i}^{\mathrm{comp}}$ and transmission delay $t_{r,i}^{\mathrm{comm}}$.
\subsubsection{Local Computation}
Let $f_{r, i} $ represent the computation capacity of device $i$ in round $r$, measured in the number of CPU cycles per second. The computation time needed for the $\tau_{r,i}$ local training iterations at device $i$ is given by
\begin{equation}
    t_{r,i}^{\mathrm{comp}} = \frac{\tau_{r,i}C_iD}{f_{r,i}},
\end{equation}
where $C_i$ (cycles/sample) is the number of CPU cycles required to compute a data sample. Assume that the computation latency needed for model aggregation at the BS is relatively negligible due to its abundant computational capacity and the considerably low complexity of the aggregation process.

\subsubsection{Wireless Transmission}
After local computation, the selected devices upload their local updates via frequency-division multiple access (FDMA) with a total bandwidth $B$.
We consider a line of sight (LoS)-dominating environment, since future wireless systems will predominantly operate in mmWave/THz with quasi-light propagations.
For device $i$, the achievable uplink rate is given by 
\begin{equation}
    {u_{r,i}} = {b_{r,i}}{\log _2}\left( {1 + \frac{{{p_{r,i}}h_{r,i}^2}}{{{b_{r,i}}{N_0}}}} \right),
\end{equation}
where $b_{r,i}$ is the bandwidth allocated to device $i$, satisfying $\sum_{i \in \mathcal{K}}a_{r,i}b_{r,i} \leq B$; $p_{r,i} $ is the transmit power of device $i$ in round $r$; $h_{r,i}$ denotes the corresponding channel amplitude gain; $N_0$ is the power spectral density (PSD) of the additive white Gaussian noise (AWGN) at the BS. 

Thus, the transmission latency of device $i$ is given by
\begin{equation}
    t_{r,i}^{\mathrm{comm}} = \frac{S}{u_{r,i}},
\end{equation}
where $S$ is the size of $\Delta_{r,i}$ (in bits). 
We assume that the transmissions between the BS and devices are reliable and free of packet errors.
Consider synchronous aggregation, then the total latency per round is determined by the slowest of the selected devices. Thus, we require
\begin{equation}
\max _{i \in \mathcal{K}}\left\{a_{r,i} t_{r,i}\right\} = \max _{i \in \mathcal{K}}\left\{a_{r,i}\left(t_{r, i}^{\mathrm{comp}}+t_{r, i}^{\mathrm{comm}}\right)\right\} \leq t_{\mathrm{thr}},
\end{equation}
where $t_{\mathrm{thr}}$ denotes the (maximum allowable) overall latency threshold per round.

\section{Problem Formulation of federated learning with dynamically adjusted learning rates}

In this section, we propose the FLARE framework, which mitigates the impact of inconsistent local training iterations between devices and expedites the convergence by dynamically adjusting the learning rates of devices involved. 

\begin{figure}[!t]
	\centering
	\subfloat[]{\includegraphics[width=1.69in]{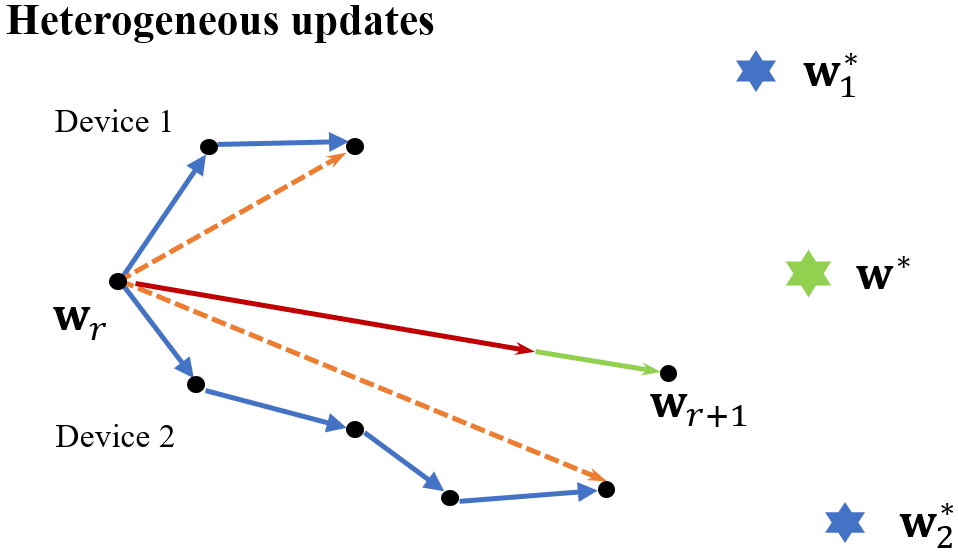}
		\label{fig_2_1}}
	\subfloat[]{\includegraphics[width=1.69in]{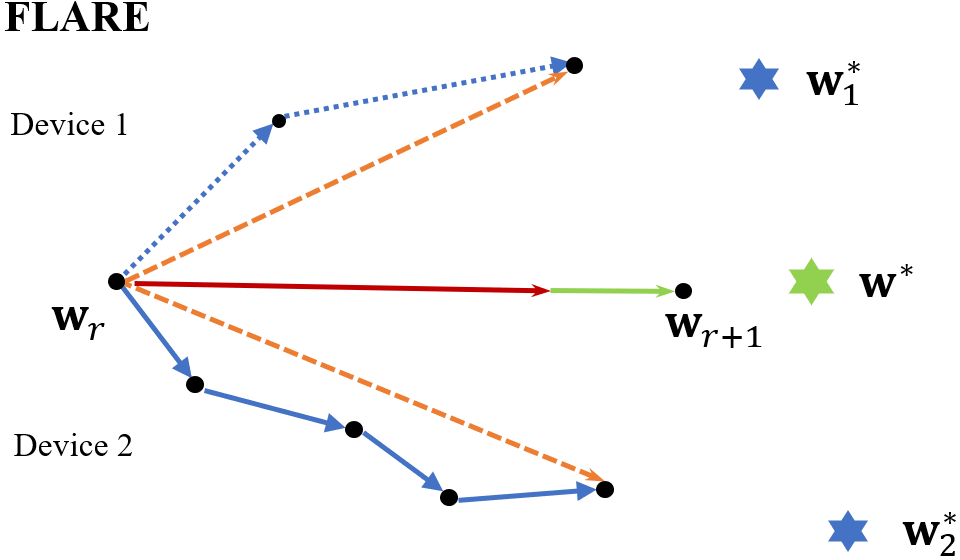}
		\label{fig_2_2}}
	\caption{An illustration of model updates of a heterogeneous setting with (a) equal learning rates and (b) FLARE. The green and blue marks represent the minima of global and local objectives, respectively.}
	\label{fig_2}
\end{figure}

\subsection{Overview of FLARE}
Within the FL architecture, the significant difference in the training loss landscape of different local models, resulting from the distinct learning capabilities of different devices, can lead to a global objective deviation and hinder the overall convergence~\cite{diao2021heterofl,qu2022generalized}. 
As illustrated in Fig. 2(a), substantially imbalanced updating progresses between different devices (e.g., Device 2 updates its local model much faster than Device~1) are likely to lead to non-negligible deviation of the updated global model ${\bf{w}}_{r+1}$ from its optimal $\boldsymbol{\rm{w}}^*$.
It is imperative to enhance the consistency of local training progress towards the optimal global parameter $\mathbf{w}^*$ in a resource-constrained WFL system.

We propose to \textit{dynamically adjust the learning rates of the selected devices based on their computation capabilities} so that the local SGD updates of the devices can progress at consistent rates per round.
The local learning rates are linearly scaled to prevent the global objective deviation caused by unbalanced local updates~\cite{fednova}. 
In other words, resource-constrained devices can conduct a relatively smaller number of local training iterations at faster learning rates. As a result, they can make consistent progress in local training with their more computationally powerful peers, and avoid becoming stragglers, as illustrated in Fig. 2(b).

At any round $r$, each device reports its available instantaneous computational resources to the BS. This report incurs negligible communication overhead compared to the local model updates. After a device subset $\mathcal{M}_r$ is selected, the BS determines $\bar{\tau}_r$ for the selected devices to adjust their local training iteration numbers.
Specifically, the BS sends $\bar{\tau}_r$ and the global model ${\bf{w}}_{r}$ to the selected devices. 
By setting its learning rate to $\widetilde{\eta}_{\mathrm{l}}=\eta_{\mathrm{l}}\bar{\tau}_r/\tau_{r,i}$, device $i \in \mathcal{M}_r$ executes $\tau_{r,i}$ local SGD iterations per round $r$, as given by 
\begin{align} \label{adjusted update}
    {\bf{w}}_{r,i}^j 
    &= {\bf{w}}_{r,i}^{j-1} - \widetilde{\eta}_{\mathrm{l}} g_{r,i}^{j-1} \\ \nonumber 
    &= {\bf{w}}_{r,i}^{j-1} - \eta_{\mathrm{l}}\frac{\bar{\tau}_r}{\tau_{r,i}}g_{r,i}^{j-1},\,j=0,1,\cdots,\tau_{r,i}-1.
\end{align}
After local computations, each selected device $i$ uploads the gradient update $\widetilde{\Delta}_{r,i} = {\bf{w}}_{r,i}^{\tau_{r,i}}-{\bf{w}}_{r}$ to the BS.
The BS then performs the following aggregation:
\begin{align}
    \boldsymbol{\rm{w}}_{r+1} = \boldsymbol{\rm{w}}_r - \eta_{\rm g} \sum_{i \in \mathcal{M}_r} \frac{1}{M_r} \sum_{j=0}^{\tau_{r,i}-1} \eta_{\mathrm{l}}\frac{\bar{\tau}_r}{\tau_{r,i}}g_{r,i}^{j-1}.
\end{align}

\noindent\textbf{Remark 1:}
FLARE can be viewed as a generalization version of FedAvg by mitigating the imbalance in local SGD iterations. It reduces to FedAvg if $\mathcal{M}_r$ is randomly uniformly selected with $\tau_{r,1}=\dots=\tau_{r,K}$ and $\eta_{\rm g} = 1$ at each round $r$.

\subsection{Approximation Error Analysis}
We analyze the approximation errors caused by the proposed learning rate adjustment in FLARE, with respect to performing a consistent number of local training iterations (e.g., $\overline{\tau}_r$ iterations) across all devices.
We establish the following Theorem.

\begin{theorem}
Suppose that ${\bf{w}}$ is obtained by taking $\overline{\tau}_r$ SGD steps with a learning rate $\eta_{\mathrm{l}}$, and $\widetilde{\bf{w}}$ is obtained by taking ${\tau}_r$ SGD steps with a learning rate ${\widetilde \eta}_{\mathrm{l}} = \eta_{\mathrm{l}}{\overline{\tau}_r}/{\tau}_r$.
Starting from the same initial point with a non-convex function $f$, the mean squared error (MSE) of $\widetilde{\bf{w}}$ with respect to ${\bf{w}}$ is bounded by
\begin{equation} \label{approximation error}   
\mathbb{E}\left[\left\|\widetilde{\boldsymbol{\rm{w}}}-\boldsymbol{\rm{w}}\right\|_2^2 \right] \leq 8\eta_{\mathrm{l}}^2{\overline{\tau}_r^2}(g+\sigma^2), 
\end{equation}
\textit{where $\sigma^2$ and $g$ denote the SGD variance and gradient upper bound, respectively.}
\end{theorem}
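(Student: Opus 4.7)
The plan is to compare the two trajectories by writing each endpoint as the initial point plus a telescoping sum of scaled stochastic gradients, and then bounding the $\ell_2$ deviation of the difference of those sums using only the variance and gradient-magnitude assumptions, without attempting to couple the two trajectories iterate-by-iterate (which would be fragile because the two runs visit different intermediate points, so their stochastic gradients are not easily compared).

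Concretely, I would start by unrolling the recursions. Denoting the common initial point by $\mathbf{w}^0$, write
\begin{equation*}
\mathbf{w} \;=\; \mathbf{w}^0 - \eta_{\mathrm{l}} \sum_{j=0}^{\overline{\tau}_r-1} g^{j},
\qquad
\widetilde{\mathbf{w}} \;=\; \mathbf{w}^0 - \widetilde{\eta}_{\mathrm{l}} \sum_{j=0}^{\tau_r-1} \widetilde{g}^{j}
\;=\; \mathbf{w}^0 - \eta_{\mathrm{l}}\,\frac{\overline{\tau}_r}{\tau_r}\sum_{j=0}^{\tau_r-1}\widetilde{g}^{j},
\end{equation*}
where $g^j$ and $\widetilde g^j$ are the stochastic gradients sampled along the two trajectories. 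The rescaling of the step size built into FLARE is exactly what makes the two effective ``total drifts'' comparable, because the prefactor in front of the second sum is $\eta_{\mathrm{l}} \overline{\tau}_r / \tau_r$ while the number of summands is $\tau_r$.

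Next I would apply the elementary inequality $\|a-b\|_2^2 \leq 2\|a\|_2^2 + 2\|b\|_2^2$ to split the error into one term per trajectory, and then use Jensen's (equivalently Cauchy--Schwarz) to push the norm inside each sum, picking up the number of summands as a factor:
\begin{equation*}
\Bigl\|\sum_{j=0}^{n-1} x_j\Bigr\|_2^2 \;\leq\; n \sum_{j=0}^{n-1} \|x_j\|_2^2.
\end{equation*}
Applied with $n=\overline{\tau}_r$ and $n=\tau_r$ respectively, this yields upper bounds proportional to $\eta_{\mathrm{l}}^2 \overline{\tau}_r \sum_j \|g^j\|^2$ and $\eta_{\mathrm{l}}^2 (\overline{\tau}_r/\tau_r)^2 \tau_r \sum_j \|\widetilde g^j\|^2$; after taking expectations, each stochastic-gradient squared norm can be dominated by $2\|\nabla f\|_2^2 + 2\mathbb{E}\|\text{noise}\|_2^2 \leq 2g + 2\sigma^2$ using the standard decomposition into deterministic and zero-mean parts together with the gradient-magnitude bound $g$ and variance bound $\sigma^2$ stated in the theorem. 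The two contributions each collapse to $2\eta_{\mathrm{l}}^2 \overline{\tau}_r^2 (g+\sigma^2)$, and together with the outer factor of $2$ from the $\|a-b\|_2^2$ split this gives the claimed $8\eta_{\mathrm{l}}^2 \overline{\tau}_r^2 (g+\sigma^2)$.

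The main obstacle, as noted above, is conceptual rather than computational: because $f$ is non-convex and the two runs decorrelate after the first step, any attempt to bound $\widetilde{\mathbf{w}}-\mathbf{w}$ by a recursive coupling would require additional smoothness/contractivity assumptions the theorem does not invoke. The workaround is to pay the price of the crude inequalities $\|a-b\|_2^2 \leq 2\|a\|_2^2 + 2\|b\|_2^2$ and Jensen's, which is exactly where the constant $8$ (rather than the sharper $4$ one would obtain via a centered/bias-variance decomposition) comes from; this loss is harmless for the use later in the paper, where only the scaling $\mathcal{O}(\eta_{\mathrm{l}}^2 \overline{\tau}_r^2 (g+\sigma^2))$ matters.
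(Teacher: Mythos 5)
Your proposal is correct and follows essentially the same route as the paper's Appendix A: unroll both trajectories into sums of scaled stochastic gradients, split via $\|a-b\|^2\le 2\|a\|^2+2\|b\|^2$, push the norm inside each sum with the $n$-term Cauchy--Schwarz inequality, and bound each stochastic gradient by $2(g+\sigma^2)$ using the variance and gradient bounds, so that the $\tau_r$ factors cancel against $\widetilde{\eta}_{\mathrm{l}}^2=\eta_{\mathrm{l}}^2\overline{\tau}_r^2/\tau_r^2$ and yield the constant $8$. Your remark that a centered decomposition would sharpen the constant to $4$ is accurate but, as you note, immaterial to how the bound is used.
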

\begin{proof}
    See Appendix A.
\end{proof}

From \textbf{Theorem 1}, the RHS of \eqref{approximation error} depends only on the constants, $g$ and $\sigma^2$, when the local learning rate $\eta_{\mathrm{l}}$ is inversely proportional to $\overline{\tau}_r$, i.e., $\eta_{\mathrm{l}} = \mathcal{O}\left(\frac{1}{\overline{\tau}_r}\right)$.   
As a consequence, the MSE, $\mathbb{E}\left[\left\|\widetilde{\boldsymbol{\rm{w}}}-\boldsymbol{\rm{w}}\right\|_2^2 \right]$, is bounded under FLARE.
Also, such approximation errors decrease with the progress of model training since the corresponding gradient paradigm becomes smaller, leading to a tighter gradient upper bound $g$.
In Section III-C, we demonstrate that the relationship between $\eta_{\mathrm{l}}$ and $\bar{\tau}_r$ plays a critical role in reducing the convergence upper bound.

\subsection{WFL Problem Formulation}
For WFL with constrained computing and communication resources, the objective is to minimize the training loss. A general formulation of such a problem is given by
\begin{subequations}\label{overall problem}
    \begin{align}	~~~~~~\textbf{P1}:~~~&\min_{\boldsymbol{a},\boldsymbol{b}}\quad F({\bf{w}}) \label{ini obj}\\
		 \text{s.t.}~
            & ~~a_{r,i} \in \{ 0,1 \}  , \label{constraint 0}\\
		& ~~\sum_{i \in \mathcal{K}}a_{r,i}b_{r,i} \leq B  , \label{constraint 1}\\
		& ~~a_{r,i}t_{r,i} \leq t_{\mathrm{thr}} ,\forall i \in \mathcal{K},\forall r \in [R]. \label{constraint 2}
	\end{align}
\end{subequations}
where $\boldsymbol{b} = [b_{r,1},\cdots,b_{r,K}]^{\top}$ collects the bandwidth allocations of the $K$ devices in round $r$.
Constraint \eqref{constraint 1} specifies that the total bandwidth $B$ must not be exceeded; and \eqref{constraint 2} dictates that the computation and transmission must be completed before a proposed latency threshold $t_{\mathrm{thr}}$ in each round. 

During the training process, the BS dynamically determines a subset of devices and allocates bandwidths to achieve the minimal global loss while satisfying constraints \eqref{constraint 1} and \eqref{constraint 2}.
In order to analyze how the convergence of FLARE is affected by device selection and resource allocation, we next establish an asymptotic convergence upper bound for $F(\bf{w})$ and minimize the derived upper bound to seek an efficient solution for \eqref{overall problem}. 

\subsection{Convergence Analysis}
We analyze the convergence of FLARE, which contributes to the improved tractability of Problem \textbf{P1}, as will be described in Section IV. 
This starts with the following assumptions.

\noindent \textbf{Assumption 1.} (Unbiasedness and Variance Boundedness).
For a model parameter $\mathbf{w}$, the stochastic gradient of each device is an unbiased estimator of the true local gradient, i.e.,
\begin{equation}
   \mathbb{E}[\nabla f_i(\mathbf{w} ; \xi)]=\nabla f_i(\mathbf{w}), \,\,\, \forall i \in \mathcal{K}.
\end{equation}
Moreover, there exists a bounded SGD variance $\sigma^2$ such that
\begin{equation}
    \mathbb{E}\left[\|\nabla f_i(\mathbf{w} , \xi)-\nabla f_i(\mathbf{w})\|^2\right] \leq \sigma^2, \,\,\, \forall i \in \mathcal{K}.
\end{equation}

\noindent \textbf{Assumption 2.} (Smoothness)
The local objective function $f_i$ is Lipschitz smooth; i.e., there exists a constant $L>0$ satisfying
\begin{equation}
    \left\|\nabla f_i(\boldsymbol{\rm{w}}_2)-\nabla f_i(\boldsymbol{\rm{w}}_1)\right\| \leq L\|\boldsymbol{\rm{w}}_2-\boldsymbol{\rm{w}}_1\|, \,\,\, \forall i \in \mathcal{K}.
\end{equation}

\noindent \textbf{Assumption 3.} (Gradient Dissimilarity Boundedness) There exist two constants $G \geq 0$ and $H \geq 1$ such that
\begin{equation}\label{BGD}
    \left\|\nabla f_i(\boldsymbol{\rm{w}})\right\|^2 \leq G^2+H^2\|\nabla F(\boldsymbol{\rm{w}})\|^2, \,\,\, \forall i \in \mathcal{K}.
\end{equation}

\textbf{Assumptions 1 -- 3} have been extensively considered in the literature, e.g., \cite{fednova,fedlin,scaffold}.
Under these assumptions, we analyze the one-round convergence of FLARE as follows.

\noindent\textbf{Lemma 1.} \textit{For each round $r$ and the subset of selected devices $\mathcal{M}_r$, the upper bound of the one-round convergence of FLARE with a non-convex loss function is given by } 
\begin{equation} \label{one round error}
    \mathbb{E}\!\left[ F(\!{\bf{w}}_{r+1}\!)\!-\!F(\!{\bf{w}}_{r}\!)\right] 
    \!\leq \!\! -\frac{\bar{\tau}_r \eta_{\rm g} \eta_{\mathrm{l}} \mathrm{q}_{r}}{2}\!\left\|\nabla \! F\!\left({\bf{w}}_r\right)\right\|^2 \!\!+ \!\phi_1 \!\!+ \!\phi_2 ,
\end{equation}
\textit{where $\phi_1 = 1.2L^2\bar{\tau}_r^3{\eta_{\rm g}\eta_{\mathrm{l}} ^3}G^2$, $\phi_2 = \frac{L \bar{\tau}_r^2 \eta_{\rm g} \eta_{\mathrm{l}}^2}{2}\left(\frac{ L \bar{\tau}_r \eta_{\mathrm{l}}}{ M_r}+\frac{\eta_{\rm g}}{M_r^2}\right) \sigma^2 \sum_{i \in \mathcal{M}_r} \frac{1}{\tau_{r, i}}$, and $\mathrm{q}_{r}$ is a round-specific positive constant.}
\begin{proof}
    See Appendix B.
\end{proof}

\textbf{Lemma 1} reveals that the convergence bound of any round is affected by $\phi_1$ and $\phi_2$.
Here, $\phi_1$ is determined by $G^2$, which reflects the degree of inconsistency between the local and global gradients, as stated in \textbf{Assumption 3}.
By contrast, $\phi_2$ can be further controlled as it depends heavily on the numbers of selected devices $M_r$ and local training iterations $\tau_{r,i}$. We note that $\phi_2$ can be reduced by selecting an adequate $M_r$, and scheduling each device to perform a sufficient number $\tau_{r, i}$ of local updates in round $r$. 
This approach is more effective in decreasing the upper bound $\mathbb{E}\left[ F({\bf{w}}_{r+1})-F({\bf{w}}_{r})\right]$ than only maximizing the number of selected devices (as done in, e.g.,\cite{8761315,9237168}).

Given the one-round convergence in \textbf{Lemma 1}, we elucidate the convergence performance of FLARE in \textbf{Theorem 2}. 
For illustration convenience, we assume $\bar{\tau}_r = \bar{\tau},\, \forall r = 1,\cdots,R$ in the theorem. (Nevertheless, 
it is straightforward to extend \textbf{Theorem 2} to the situation where $\bar{\tau}_r,\,\forall r$ changes over rounds, i.e., by applying additional round-specific scaling to \eqref{septal terms cancel 2} before summing \eqref{septal terms cancel} up from $r=1$ to $R$ in Appendix C.) 

\begin{theorem}\label{Convergence}
    Given an initial global model ${\bf{w}}_1$ and the optimal global model ${\bf{w}}^*$, the convergence upper bound of FLARE with a total of $R$ training rounds is given by
    \begin{align}\label{non-convex conv}
     \frac{1}{R}\sum_{r=1}^{R} \mathbb{E}\left[||\nabla F({{\bf{w}}_r})|{|^2}\right] \leq \frac{2[F({\bf{w}}_1)-F({\bf{w}}^*)]}{\bar{\tau}\eta_{\rm g}\eta_{\mathrm{l}}\mathrm{q} R} +\Phi_1+\Phi_2 , 
    \end{align}
where $\Phi_1 =  \frac{2.4}{\mathrm{q}}L^2\bar{\tau}^2\eta_{\mathrm{l}}^2G^2 $, $\Phi_2 =  \frac{L\eta_{\mathrm{l}}}{\mathrm{q}}\left( L \bar{\tau} \eta_{\mathrm{l}}+ \frac{\kappa \eta_{\rm g}}{M}\right) \sigma^2$
, $\kappa = \mathop {\max }\limits_{r,i} \{ \frac{{{\overline{\tau}}}}{{{\tau_{r,i}}}}\}$, $M  \buildrel \Delta \over = \mathop {\min }\limits_r \{M_r\}$, and $\mathrm{q}$ is a positive constant. 

\end{theorem}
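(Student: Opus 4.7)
The plan is to obtain Theorem 2 from Lemma 1 by the standard recipe of rearranging the one-round descent inequality, telescoping over $R$ rounds, and replacing the round-specific quantities $\mathrm{q}_r$, $M_r$, and $\tau_{r,i}$ by the uniform worst-case surrogates $\mathrm{q}$, $M$, and $\kappa$ that appear in the theorem statement.

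Setting $\bar{\tau}_r=\bar{\tau}$ as the theorem assumes, I first rewrite Lemma~1 as a descent-style inequality by moving the gradient-norm term to the left-hand side:
\[
\frac{\bar{\tau}\eta_{\rm g}\eta_{\mathrm{l}}\mathrm{q}_r}{2}\,\|\nabla F({\bf w}_r)\|^2 \le \mathbb{E}[F({\bf w}_r)-F({\bf w}_{r+1})] + \phi_1 + \phi_2.
\]
Defining $\mathrm{q}:=\min_r \mathrm{q}_r>0$ and lower-bounding $\mathrm{q}_r$ by $\mathrm{q}$ yields a uniform-in-$r$ coefficient on the left. Summing from $r=1$ to $R$ converts the $F$-differences into a telescoping sum $F({\bf w}_1)-\mathbb{E}[F({\bf w}_{R+1})] \le F({\bf w}_1)-F({\bf w}^*)$ by optimality of ${\bf w}^*$, while the $\phi_1$ term contributes exactly $R\phi_1$ since $\phi_1$ is round-invariant once $\bar{\tau}$ is fixed. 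Dividing through by $R\cdot\bar{\tau}\eta_{\rm g}\eta_{\mathrm{l}}\mathrm{q}/2$ immediately recovers the first two terms of the target bound, with $\Phi_1 = 2\phi_1/(\bar{\tau}\eta_{\rm g}\eta_{\mathrm{l}}\mathrm{q}) = \frac{2.4}{\mathrm{q}}L^2\bar{\tau}^2\eta_{\mathrm{l}}^2 G^2$.

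The main obstacle lies in bounding $\phi_2$, which still carries the round-dependent factors $M_r$ and $\sum_{i\in\mathcal{M}_r}1/\tau_{r,i}$. I would exploit $\kappa=\max_{r,i}\{\bar{\tau}/\tau_{r,i}\}$ to write $1/\tau_{r,i}\le \kappa/\bar{\tau}$, whence $\sum_{i\in\mathcal{M}_r}1/\tau_{r,i}\le M_r\kappa/\bar{\tau}$, and use $M=\min_r M_r$ to bound the remaining $1/M_r$ by $1/M$. The crucial bookkeeping is to treat the two additive components of $\phi_2$ separately so that the $M_r$ produced by the sum bound cancels the $1/M_r$ in the $L\bar{\tau}\eta_{\mathrm{l}}/M_r$ component (leaving a contribution free of $1/M$), while the $\eta_{\rm g}/M_r^2$ component retains a $1/M$ after the cancellation. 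After dividing by $R\cdot\bar{\tau}\eta_{\rm g}\eta_{\mathrm{l}}\mathrm{q}/2$, these two pieces combine into $\Phi_2=\frac{L\eta_{\mathrm{l}}}{\mathrm{q}}(L\bar{\tau}\eta_{\mathrm{l}}+\kappa\eta_{\rm g}/M)\sigma^2$; matching the exact placement of $\kappa$ (only alongside $\eta_{\rm g}/M$) is the most delicate part and must be arranged by the order in which the worst-case substitutions are applied to each component.

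As a final remark, the extension noted parenthetically by the authors to round-varying $\bar{\tau}_r$ would require multiplying each round's descent inequality by a round-specific rescaling so that the variable coefficients on the left-hand side all become a common constant before telescoping; under the theorem's assumption $\bar{\tau}_r\equiv\bar{\tau}$ this rescaling is unnecessary, and the proof reduces to the clean telescoping argument above.
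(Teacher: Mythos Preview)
Your proposal follows the paper's Appendix~C almost step for step: rearrange Lemma~1 into a per-round bound on $\|\nabla F(\mathbf{w}_r)\|^2$, telescope the $F$-differences over $R$ rounds, and then replace $\mathrm{q}_r$, $M_r$, and $\sum_{i}1/\tau_{r,i}$ by their worst-case surrogates $\mathrm{q}$, $M$, and $\kappa$. Your instinct that the placement of $\kappa$ is ``delicate'' is correct, but the proposed fix---reordering the substitutions---cannot actually work: once you invoke $1/\tau_{r,i}\le\kappa/\bar{\tau}$, the factor $\kappa$ necessarily multiplies \emph{both} additive pieces of $\phi_2$, not only the $\eta_{\rm g}/M$ piece, and no order of application changes that algebra. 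The paper's own derivation makes exactly the same jump (from the $\phi_2$-term carrying $\sum_i 1/\tau_{r,i}$ to the form with $\kappa_r$ attached only to the second summand) without additional justification, so this looseness is inherited from the source rather than introduced by you.
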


\begin{proof}
    See Appendix C.
\end{proof}

Compared with the existing studies 
\cite{zhao2018federated,scaffold,li2021model}, the convergence analysis of FLARE in \textbf{Theorem 2} is generic in the sense that it supports both non-i.i.d. data and imbalanced local training, i.e., data and device heterogeneity with non-convex training models. 

\begin{itemize}
    \item By observing the RHS of \eqref{non-convex conv} in comparison with \eqref{septal terms cancel 3}, we can reveal the impact of device selection on the convergence of FLARE: In any round $r$, a larger $M_r$ leads to a tighter upper bound in \eqref{septal terms cancel 3} since $M  \buildrel \Delta \over = \mathop {\min }\limits_r \{M_r\}$, where $M$ denotes the minimum number of selected devices in $R$ rounds. 
    Moreover, $\frac{\kappa L\eta_{\mathrm{l}}\eta_{\rm g}}{\mathrm{q} {M}}$ in $\Phi_2$ decreases linearly with the number $M$ of selected devices. Under full participation at each round, i.e., $M=K$, the convergence gap is the minimum.

    \item 
    The parameter $\kappa$ in $\Phi_2$ represents the maximal difference of local training iterations relative to $\overline{\tau}$, which illustrates the convergence error introduced by the imbalanced local SGD iterations.
    It would decrease if the selected devices perform a consistent number of local iterations, i.e., $\tau_{r,i}$ is equal, $\forall i$. The corresponding error $\frac{L\eta_l\kappa\eta_{\rm g}}{\mathrm{q}M}\sigma^2$ in $\Phi_2$ can be reduced if $M$ is larger.

    \item 
    As shown in \eqref{non-convex conv},
    $\Phi_1$ and $\Phi_2$ do not vanish with the increase of training rounds $R$. They have similar structures and grow with $G^2$ and $\sigma^2$.
    To effectively bound ${\bar{\tau}}^2\eta_{\mathrm{l}}^2$ in $\Phi_1$ and $\bar{\tau}\eta_{\mathrm{l}}^2$ in $\Phi_2$, 
    we design $\eta_{\mathrm{l}} = \mathcal{O}\left(\frac{1}{\bar{\tau}}\right)$. 
    As a result, both $\Phi_1$ and $\Phi_2$ are upper bounded in the sense that they do not grow with $\bar{\tau}$.
    In turn, the approximation error in \eqref{approximation error} is upper bounded, as described in \textbf{Theorem 1}.
    
\end{itemize}

We also establish the convergence rate of FLARE through an appropriate configuration of global and local learning rates, as delineated in the following corollary.

\vspace{2 mm}
\noindent \textbf{Corollary 1}\textbf{.}
    \textit{Let $\eta_{\mathrm{l}}=\frac{1}{\bar{\tau}\sqrt{R}}$ and $\eta_{\rm g}=\sqrt{\bar{\tau}M}$. The convergence rate of FLARE is given by}
    \begin{equation} \label{O-form}
        \frac{1}{R}\sum_{r=1}^{R} \mathbb{E}\left[||\nabla F({{\bf{w}}_r})|{|^2}\right] \leq \mathcal{O}\left( \frac{1}{\sqrt{\bar{\tau}MR}} + \frac{1}{R} \right).   
    \end{equation}
\begin{proof} 
By plugging $\eta_{\mathrm{l}}=\frac{1}{\bar{\tau}\sqrt{R}}$ and $\eta_{\rm g}=\sqrt{\bar{\tau}M}$ into \eqref{non-convex conv} and suppressing 
the low-order terms,
\eqref{O-form} is obtained.
\end{proof}

\noindent \textbf{Remark 2:} It is indicated in \textbf{Corollary 1} that FLARE with the local and global learning rates of $\eta_{\mathrm{l}}=\frac{1}{\bar{\tau}\sqrt{R}}$ and $\eta_{\rm g}=\sqrt{\bar{\tau}M}$, can achieve a linear convergence speedup with the number $M$ of devices\footnote{To attain the accuracy of $\epsilon$, $\mathcal{O}\left(\frac{1}{\epsilon^2}\right)$ steps are needed with a convergence rate $\mathcal{O}\left(\frac{1}{\sqrt{R}}\right)$, while $\mathcal{O}\left(\frac{1}{M \epsilon^2}\right)$ steps are needed if the convergence rate is $\mathcal{O}\left(\frac{1}{\sqrt{M R}}\right)$. In this sense, we achieve a linear speedup with increasingly selected devices.}. The convergence rate yields $\mathcal{O}\left( \frac{1}{\sqrt{\bar{\tau}MR}} \right)$ when $R>\sqrt{\bar{\tau}M}$. 
To the best of our knowledge, this is the first work to achieve such a convergence rate 
under non-i.i.d. data, imbalanced local updates (e.g., in terms of local iteration number per round), and arbitrary device selections in a general non-convex case, thereby validating the effectiveness of FLARE.

\vspace{2 mm}
\noindent \textbf{Remark 3:}
Compared to the existing methods, e.g., \cite{yang2021achieving, scaffold, li2019convergence}, FLARE is advantageous in both convergence rate and communication overhead. In non-convex cases, the convergence rate of FLARE, i.e., $\mathcal{O}\left( \frac{1}{\sqrt{\bar{\tau}MR}} \right)$, surpasses those of prior works, e.g., $\mathcal{O}\left( \frac{\bar{\tau}}{\sqrt{MR}} \right)$ in \cite{yang2021achieving,li2019convergence} by properly designing the learning rates. 
The convergence rate of FLARE increases with the number $\bar{\tau}$ of SGD iterations per round.
In variance reduction methods, e.g., Scaffold~\cite{scaffold}, devices perform local updates with correction terms of the same dimension as the model parameter ${\bf{w}}$. The BS aggregates these terms to update the global model.
FLARE achieves a comparable convergence rate, while device $i$ only needs to report $\tau_{r,i}$ to the BS, which is relatively negligible compared to the correction terms used in variance reduction methods.

\section{Proposed Scheduling under FLARE}
In light of \eqref{one round error} and \eqref{non-convex conv}, we optimize bandwidth allocation and device selection by minimizing the convergence upper bound. We first develop an iterative algorithm that decouples bandwidth allocation and device selection to alleviate communication bottlenecks and accelerate convergence. 
We further reveal a convex structure of Problem \textbf{P2} under certain scenarios and propose an efficient solution with linear programming. 

\subsection{Problem Reformulation}
Minimizing the one-round convergence upper bound in \eqref{one round error} contributes to decreased convergence upper bound in \textbf{Theorem~2}, i.e., \eqref{non-convex conv}.
Consequently, the minimization of $F(\bf{w})$ can be approximated by minimizing the RHS of \eqref{one round error}, which can be efficiently achieved by minimizing $\phi_2$. Problem \textbf{P1} can be rewritten as
\begin{subequations}\label{P2}
\begin{align}	
    \textbf{P2}: ~~&\min_{\boldsymbol{a},\boldsymbol{b}} ~
     \left( \!\frac{1}{M_r}+\frac{\gamma}{M_r^2}\!\right)\!\sum_{i \in \mathcal{M}_r}\frac{1}{\tau_{r,i}} , \label{obj of P2} \\
    \text{s.t.}
            & \quad a_{r,i} \in \{ 0,1 \}  , \label{constraint 1 of P2}\\
		& \quad \sum_{i \in \mathcal{K}}a_{r,i}b_{r,i} \leq B  , \label{constraint 2 of P2}\\
		& \quad a_{r,i}t_{r,i} \leq t_{\mathrm{thr}} ,\forall i \in \mathcal{K}, \,\,\, \forall r = 1,\cdots,R. \label{constraint 3 of P2}
\end{align}
\end{subequations}
where we define $\gamma = \frac{\eta_{\rm g}}{L}$ since $\eta_{\mathrm{l}} \sim \mathcal{O}\left(\frac{1}{\bar{\tau}}\right)$, according to \textbf{Theorem 1}.
Problem \textbf{P2} implies that the BS should optimally specify the number of scheduled devices and their respective local computation and transmission capabilities to minimize \eqref{obj of P2}, while adhering to the bandwidth and delay constraints \eqref{constraint 2 of P2} and \eqref{constraint 3 of P2}. Problem \textbf{P2} is a mixed integer nonlinear programming (MINLP) problem.
\subsection{Joint Bandwidth Allocation and Device Participation}

According to \eqref{constraint 3 of P2}, selecting devices with smaller delays helps admit more devices into an FL training round (i.e., increasing $\mathcal{M}_r$) and, in turn, decrease the objective function in \eqref{obj of P2}.
We can view the MINLP Problem \textbf{P2} as a nested problem by treating $t_{r,i},\,\forall i, r$ as a function of $\boldsymbol{a}$ and $\boldsymbol{b}$, albeit $t_{r,i}$ does not appear explicitly in \eqref{obj of P2}. This is because minimizing $t_{r,i}$ can decrease (19a) by expanding the feasible area of Problem \textbf{P2}, i.e., with (19d) satisfied.
For this reason, we decompose \textbf{P2} into two manageable sub-problems.
We solve the optimal bandwidth allocation strategy $\boldsymbol{b}^{*}$ for any given (feasible) select of selected devices $\mathcal{M}_r$ to minimize the total latency in round $r$. Then, we develop a greedy method to find $\mathcal{M}_r$ that minimizes (19a). These two steps iterate till convergence, achieving the overall policy. 

\subsubsection{Bandwidth Allocation} Given a device subset $\mathcal{M}_r$, we ensure the total latency of the selected devices satisfies \eqref{constraint 3 of P2} by carefully allocating the available bandwidth $B$ to the devices, reducing the overall latency per round. The bandwidth allocation is cast as the following min-max problem:  
\begin{subequations}\label{solution of b}
\begin{align}	
    \textbf{P3}:\quad &  \mathop {\min }\limits_{\boldsymbol{b}} \mathop {\max }\limits_i \quad t_{r,i} \label{obj of P3}\\
    \text{s.t.}
		& \quad \sum_{i \in \mathcal{M}_r}b_{r,i} \leq B  , \label{constraint 1 of P3}\\
  		& \quad b_{r,i} \geq 0,  \forall i \in \mathcal{M}_r,\,\,\, \forall r = 1,\cdots,R,
\end{align}
\end{subequations}
where \eqref{constraint 2 of P2} is simplified to \eqref{constraint 1 of P3} as $a_{r,i}=1, \forall i \in \mathcal{M}_r$. 

Let $t_r^* = \mathop {\min }\limits_{\boldsymbol{b}} \mathop {\max }\limits_i \ t_{r,i}$ denote the optimal value of \eqref{obj of P3}. 
Given $\mathcal{M}_r$, the optimal bandwidth allocation $\boldsymbol{b}^*$ and $t_r^*$ have the following relationship.

\begin{theorem}
The optimal bandwidth allocation of Problem \textnormal{\textbf{P3}} satisfies
\begin{equation} \label{optimal bandwidth}
    b_{r,i}^{*} = \frac{S\ln 2}{(t_r^*-t_{r,i}^{\mathrm{comp}})\left( \upsilon_{r,i} + W(-\upsilon_{r,i}e^{-\upsilon_{r,i}}) \right)},
\end{equation}
\textit{where ${\upsilon_{r,i}} =  \frac{SN_0\ln2}{p_{r,i} h_{r,i}^2 \left( t_r^*-t_{r,i}^{\mathrm{comp}}\right)} $, 
$W(\cdot)$ denotes the Lambert-W function, and $t_{r}^{*}$ satisfies}
\begin{equation} \label{optimal time}
    \sum_{i \in \mathcal{M}_r} b_{r,i}^* = B.
\end{equation}
\end{theorem}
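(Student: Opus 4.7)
The plan is to first collapse the inequality constraints of Problem \textbf{P3} into equalities via a standard min-max exchange argument, then invert the Shannon capacity relation in closed form using the Lambert-$W$ function. At the optimum, I would argue that every selected device must attain the maximum delay, i.e., $t_{r,i}=t_r^{*}$ for every $i\in\mathcal{M}_r$, and that the bandwidth budget is tight, i.e., $\sum_{i\in\mathcal{M}_r}b_{r,i}^{*}=B$. Both claims follow from a short exchange argument: if some $t_{r,j}<t_r^{*}$, or if the bandwidth constraint had slack, an infinitesimal amount of bandwidth could be transferred to any maximizer $i^{*}$, strictly decreasing $t_{r,i^{*}}$ while preserving feasibility and thus strictly reducing $\max_i t_{r,i}$, contradicting optimality. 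These two conditions reduce \textbf{P3} to the system $S/u_{r,i}=t_r^{*}-t_{r,i}^{\mathrm{comp}}$ for every $i\in\mathcal{M}_r$, coupled through \eqref{optimal time}.

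The main obstacle is the algebraic inversion of this equation. For each device $i$, substituting $u_{r,i}=b_{r,i}\log_2(1+p_{r,i}h_{r,i}^2/(b_{r,i}N_0))$ into the delay equation yields a transcendental equation in $b_{r,i}$ of the form $b\ln(1+A/b)=C$. Setting $y=p_{r,i}h_{r,i}^2/(b_{r,i}N_0)$, I would reduce this to $\ln(1+y)=\upsilon_{r,i}\,y$, where $\upsilon_{r,i}$ is exactly as defined in the statement. Exponentiating and rearranging yields $(1+y)\,e^{-\upsilon_{r,i}(1+y)}=e^{-\upsilon_{r,i}}$; multiplying both sides by $-\upsilon_{r,i}$ casts this into the canonical Lambert-$W$ form
\[
-\upsilon_{r,i}(1+y)\,e^{-\upsilon_{r,i}(1+y)} \;=\; -\upsilon_{r,i}\,e^{-\upsilon_{r,i}}.
\]
Applying $W(\cdot)$ on the branch yielding the nontrivial root gives $1+y=-W(-\upsilon_{r,i}e^{-\upsilon_{r,i}})/\upsilon_{r,i}$; resubstituting $b_{r,i}=p_{r,i}h_{r,i}^2/(yN_0)$ and using the definition of $\upsilon_{r,i}$ then delivers the closed-form expression \eqref{optimal bandwidth}.

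Finally, since each $b_{r,i}^{*}$ in \eqref{optimal bandwidth} is continuous and monotonically decreasing in $t_r^{*}$ (a larger common deadline permits a smaller bandwidth share), the tight-budget equation \eqref{optimal time} has a unique positive root, which is the optimal $t_r^{*}$ of \textbf{P3}; it can be computed numerically by, e.g., a bisection on $t_r^{*}\in[\max_i t_{r,i}^{\mathrm{comp}},t_{\mathrm{thr}}]$. The delicate step to verify carefully is the branch selection for $W(\cdot)$ so that $b_{r,i}^{*}>0$, which must hold under any feasibility regime for which \textbf{P3} admits a solution with $t_r^{*}\le t_{\mathrm{thr}}$.
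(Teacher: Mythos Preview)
Your proposal is correct and follows essentially the same route as the paper's proof: an exchange/contradiction argument to force all selected devices onto the common optimal latency $t_r^{*}$ with the bandwidth budget tight, followed by inversion of the per-device delay equation via the Lambert-$W$ function. Your write-up is in fact more explicit than the paper's on two points the authors gloss over---the detailed Lambert-$W$ algebra and the branch-selection caveat---while the paper adds one small step you use implicitly, namely the monotonicity of $u_{r,i}$ in $b_{r,i}$ via $\frac{\mathrm{d}u}{\mathrm{d}b}>0$, which is what makes the bandwidth-transfer argument go through.
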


\begin{proof}
    See Appendix D.
\end{proof}

Based on \eqref{optimal bandwidth} and \eqref{optimal time}, a binary search method can be employed to obtain $t_r^*$ numerically and $b_{r,i}^{*}, \forall i$ subsequently.

\subsubsection{Device Selection}
Given the fixed bandwidth allocation strategy $\boldsymbol{b}^{*}$, device selection can be written as 
\begin{subequations}\label{P4}
\begin{align}	
    \textbf{P4}:~~ \min_{\boldsymbol{a}} ~
    & \left( \!\frac{1}{M_r}+\frac{\gamma}{M_r^2}\!\right)\!\sum_{i \in \mathcal{M}_r}\frac{1}{\tau_{r,i}} \label{obj of P4} \\
    \text{s.t.}
            & \quad t_r^* \leq t_{\mathrm{thr}} ,\forall \mathcal{M}_r \subset  \mathcal{K}, \,\,\, \forall r = 1,\cdots,R. \label{constraint 1 of P4}
\end{align}
\end{subequations}

Even with the fixed $\boldsymbol{b}^{*}$, Problem \textbf{P4} is an MINLP. We develop an iterative greedy algorithm to solve Problem \textbf{P4} efficiently. 
In the $k$-th iteration of the greedy algorithm, we first specify an ``available'' subset $\mathcal{M}_r$ of candidate devices that decreases the objective of \eqref{obj of P4}. Then, we find the specific device that minimizes the increase of the total latency $t_r$ in the available subset solving Problem~\textbf{P3}.

To determine the available subset of devices, denoted by $\Pi_k$, for the BS in the $k$-th iteration of device selection, we come up with the following Theorem.

\begin{theorem}
    Assuming that a device subset $\mathcal{Q}_k$ with $Q$ devices has been selected for up to the $k$-th iteration, the BS shall continue to select devices from the set that satisfies the following condition:
\begin{equation} \label{descent condition}
\begin{aligned}
    \frac{1}{\tau_{r,i}} < \frac{Q^2+(2\gamma+1)Q+\gamma}{Q^2(Q+\gamma+1)}\sum_{q \in \mathcal{Q}_r}\frac{1}{\tau_{r,q}}, \,\,\, \exists \, i \in \mathcal{N},
\end{aligned}
\end{equation}
\textit{where $\mathcal{N}=\mathcal{K} \setminus \mathcal{Q}_k$.}
\end{theorem}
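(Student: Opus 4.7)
The plan is to derive the stated inequality by imposing that appending a candidate device $i \in \mathcal{N}$ to the current selected set $\mathcal{Q}_k$ strictly decreases the objective of Problem~\textbf{P4}, and then reducing the resulting condition to an explicit bound on $1/\tau_{r,i}$.

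First, I would introduce the shorthand $S_Q := \sum_{q \in \mathcal{Q}_k} 1/\tau_{r,q}$, so the objective~\eqref{obj of P4} evaluated at $\mathcal{Q}_k$ (with $|\mathcal{Q}_k|=Q$) reads $g(Q) = \bigl(\tfrac{1}{Q}+\tfrac{\gamma}{Q^2}\bigr) S_Q = \tfrac{Q+\gamma}{Q^2}\, S_Q$. Tentatively appending $i$ yields a selection of size $Q+1$ with objective $g(Q+1) = \tfrac{Q+1+\gamma}{(Q+1)^2}\bigl(S_Q+1/\tau_{r,i}\bigr)$. Imposing $g(Q+1) < g(Q)$ and isolating $1/\tau_{r,i}$ gives
\begin{equation*}
\frac{1}{\tau_{r,i}} \;<\; S_Q \cdot \frac{(Q+\gamma)(Q+1)^2 - Q^2(Q+1+\gamma)}{Q^2(Q+1+\gamma)}.
\end{equation*}

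Then I would expand $(Q+\gamma)(Q+1)^2 = Q^3+(2+\gamma)Q^2+(1+2\gamma)Q+\gamma$ and $Q^2(Q+1+\gamma) = Q^3+(1+\gamma)Q^2$. Their difference is $Q^2+(2\gamma+1)Q+\gamma$: the cubic terms cancel and the $(2+\gamma)Q^2 - (1+\gamma)Q^2 = Q^2$ contribution remains. Substituting this back recovers exactly the inequality in~\eqref{descent condition} (with $Q+\gamma+1 = Q+1+\gamma$ in the denominator, and $\mathcal{Q}_k$ in place of the typographical $\mathcal{Q}_r$ in the sum). Hence any $i \in \mathcal{N}$ meeting~\eqref{descent condition} strictly lowers~\eqref{obj of P4}, justifying its inclusion in the next greedy iteration.

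The argument is essentially algebraic, so the only real obstacle is the careful bookkeeping of the polynomial cancellation: the cubic terms must cancel cleanly and the surviving quadratic must come out exactly as $Q^2+(2\gamma+1)Q+\gamma$; a sign slip in expanding $(Q+1)^2$ or in the cross-product with $\gamma$ would falsify the claim. I would also make explicit that Theorem~4 characterizes only the descent direction; feasibility with respect to the latency constraint~\eqref{constraint 1 of P4} is enforced separately by re-solving Problem~\textbf{P3} with the tentative device included and checking $t_r^* \leq t_{\mathrm{thr}}$, as described in Section~IV-B.
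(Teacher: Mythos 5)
Your proposal is correct and follows essentially the same route as the paper's Appendix E: impose that adding device $i$ strictly decreases the objective, i.e., $\frac{Q+1+\gamma}{(Q+1)^2}\left(\Delta+\frac{1}{\tau_{r,i}}\right)<\frac{Q+\gamma}{Q^2}\Delta$, and rearrange; your explicit polynomial expansion $(Q+\gamma)(Q+1)^2-Q^2(Q+1+\gamma)=Q^2+(2\gamma+1)Q+\gamma$ is exactly the cancellation the paper leaves implicit when it says the condition ``readily implies'' \eqref{descent condition}. The added remark that feasibility under \eqref{constraint 1 of P4} is checked separately via Problem \textbf{P3} is a correct and welcome clarification, though not part of the paper's proof.
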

 
\begin{proof}
    See Appendix E.
\end{proof}
\textbf{Theorem 4} ensures that the selected device at each device selection iteration decreases the value of the objective function in \eqref{obj of P4}.
By employing \eqref{descent condition}, the BS does not need to evaluate all unselected devices at every step, thereby improving search efficiency. The size of the subset $\Pi_k$ also decreases gradually. This iterative device selection terminates when a stopping condition is met, i.e., $|\Pi_k|=0$.

For each device in the subset $\Pi_k$, a binary search method is employed for solving Problem \textbf{P3}, and the device $z$ that incurs the smallest increase in the total latency $t_{r,z}^*$ is selected to be included in the set $\mathcal{Q}_k$, as described in Section IV-B(1). This iterative operation is terminated when no device in $\Pi_k$ satisfies \eqref{descent condition}, or the inclusion of any other devices in $\mathcal{Q}_k$ would violate the latency threshold $t_{\mathrm{thr}}$. This proposed scheduling policy is summarized in Algorithm \ref{Alg scheduling}, where \textbf{Theorem 4} is leveraged to ensure the monotonic decrease of \eqref{obj of P4}, until $|\Pi_k|=0$ or constraint \eqref{constraint 1 of P4} is satisfied. A suboptimal solution is then attained.

The binary search method adopted in Alg. \ref{Alg scheduling} is known to incur a computational complexity of $\mathcal{O}\left( \left| \mathcal{Q}_k \right|\log_2 \frac{t_{\mathrm{thr}}}{\epsilon} \right)$ to achieve $\epsilon$-accuracy. 
Since the number of device selection iterations must not exceed $K$ and no more than $K$ binary searches are performed at each device selection iteration, the worst-case complexity of Alg. \ref{Alg scheduling} is $\mathcal{O}\left( K^2\left| \mathcal{Q}_k \right| \log_2 \frac{t_{\mathrm{thr}}}{\epsilon}\right)$, which is no more than $\mathcal{O}\left( K^3 \log_2 \frac{t_{\mathrm{thr}}}{\epsilon}\right)$.
Compared to the exhaustive search with a complexity of $\mathcal{O}\left(2^K \left| \mathcal{Q}_k \right| \log_2 \frac{t_{\mathrm{thr}}}{\epsilon} \right)$, Alg. \ref{Alg scheduling} effectively reduces the computational complexity, especially in the presence of a large number of devices.

\begin{algorithm}[!t]
    \caption{Iterative Scheduling Algorithm}
    \label{Alg scheduling}
    \begin{algorithmic}[1]
    \STATE {Initialize $\mathcal{M}_r \gets \emptyset$, $\mathcal{Q}_1 \gets \emptyset$}
    \STATE {Initial determination: $x\gets\arg\min \limits _{i\in \mathcal{K}} \frac{1+\gamma}{\tau_{r,i}}$, $\mathcal{Q}_1 \gets \mathcal{Q}_1 \cup \{x\}$}
    \FOR{$k=1,2,...$}
        \STATE{ $\mathcal{N}_k \gets \mathcal{K} \setminus \mathcal{Q}_k$}\\
        \STATE{For $\mathcal{N}_k$, determine the available subset $\Pi_k$ with \eqref{descent condition}}
        \IF{$\Pi_k=\emptyset$}
            \STATE{\textbf{break}}
        \ENDIF
        \FOR{each device $y \in \Pi_k$}
            \STATE{With $\mathcal{Q}_k \cup \{y\}$, calculate $t_{r,y}^*$ with a binary search}
        \ENDFOR
    \IF{$\min \limits_{y} t_{r,y}^* \leq t_{\mathrm{thr}} $}
        \STATE{$z \gets \arg\min \limits_{y \in \Pi_k} t_{y}^*$, $\mathcal{Q}_k \gets \mathcal{Q}_k \cup \{z\}$  }
    \ELSIF{$\min \limits_{y} t_{r,y}^* > t_{\mathrm{thr}} $}
        \STATE{\textbf{break}}
    \ENDIF
    \ENDFOR
    \RETURN {$\mathcal{M}_r \gets \mathcal{Q}_k$}
    \end{algorithmic}
\end{algorithm}

\subsection{Linear Programming Solution with Special Structures}
According to \eqref{P2}, as $\gamma = \frac{\eta_{\rm g}}{L} \to 0$, especially when the Lipschitz gradient constant $L$ is large due to a deeper neural network structure and $\eta_{\rm g}$ is relatively small, 
\eqref{obj of P2} is readily approximated by $\frac{1}{M_r}\sum_{i \in \mathcal{M}_r}\frac{1}{\tau_{r,i}}$ since $ \gamma \ll M_r $.
In this case, we define $\boldsymbol{\nu} = \left[ \frac{1}{\tau_{r,1}},\cdots,\frac{1}{\tau_{r,K}} \right]^{\top}$, $\theta = \frac{1}{\boldsymbol{1}^{\top}\boldsymbol{a}}$, $\boldsymbol{\alpha} = {\theta}\boldsymbol{a}$, and $\boldsymbol{\beta} = {\theta}\boldsymbol{b}$. 
Problem \textbf{P2} can be rewritten in a linear form:
\begin{subequations} 
\begin{align}
    \textbf{P5}:\quad &\min_{\boldsymbol{\alpha},\boldsymbol{\beta},{\theta}} \quad  {{\boldsymbol{\nu}^{\top}\boldsymbol{\alpha}}} \label{obj of P5}\\
    \text{s.t.}
            & \quad \alpha_{r,i} c_{r,i} \leq \beta_{r,i} \leq \alpha_{r,i} B  , \label{constraint 0__}\\
		& \sum_{i\in \mathcal{K}}  \beta_{r,i} \leq \theta B ,\quad \sum_{i\in \mathcal{K}} \alpha_{r,i}=1  , \label{constraint 1__}\\
		& 0\leq \alpha_{r,i} \leq 1, 0 < \theta \leq 1 ,\forall i \in \mathcal{K}, \,\,\, \forall r = 1,\cdots,R, \label{constraint 2__}
\end{align} \label{P3}
\end{subequations}
Here the term $c_{r,i}$ shares the same form as $b_{r,i}^*$ in \eqref{optimal bandwidth}, 
with $t_r^* = t_{\mathrm{thr}}$ substituted. To this end, the minimum bandwidth is allocated when any device $i$ is selected.

It is observed that Problem \textbf{P5} is, a convex linear programming problem.
This convex linear program can be effectively solved using, e.g., the Matlab linprog function. Once the optimal $\boldsymbol{\alpha}^*$, $\boldsymbol{\beta}^*$ and $\theta^*$ are obtained, the optima of $\boldsymbol{a}^*$ and $\boldsymbol{b}^*$ can be approximately identified as $\boldsymbol{a}^* = \mathrm{round}\left(\frac{\boldsymbol{\alpha}^*}{\theta^*}\right)$ and $\boldsymbol{b}^* = \mathrm{round}\left(\frac{\boldsymbol{\beta}^*}{\theta^*}\right)$, where $\mathrm{round}(\cdot)$ stands for rounding. 
This linear programming solution complements Alg. \ref{Alg scheduling} to achieve a fast approximate solution with polynomial complexity when $\gamma$ is small.

\subsection{Overall Policy}

\begin{algorithm}[!t]
    \caption{The Proposed Scheduling Algorithm for WFL}
    \label{WFL}
    \begin{algorithmic}[1]
    \STATE {Initialize ${\bf{w}}_1$, $t_{\mathrm{thr}}$, $\eta_{\rm g}$, $\eta_{\mathrm{l}}$ and $\gamma$}
    \FOR{$r=1,2,\dots R$}
        \STATE {Each device $i \in \mathcal{K}$ sends $f_{r,i}$, $\tau_{r,i}$, $p_{r,i}$, $h_{r,i}$ to the BS}
        \STATE {The BS determines the value of $\overline{\tau}_r$.}
        \STATE {The BS derives the scheduled subset $\mathcal{M}_r$ with Alg. \ref{Alg scheduling} or linear programming ($\gamma \to 0$) in Section IV-C.} 
        \STATE{The BS broadcasts the global model ${\bf{w}}_r$, $\overline{\tau}_r$ and the allocated bandwidth $b_{r,i}^*$ to the selected devices.}
        \FOR{each scheduled device $i \in \mathcal{M}_r$ in parallel}
            \STATE {Make the learning rate adjustment $\widetilde{\eta}_{\mathrm{l}} = \frac{\overline{\tau}_r}{\tau_{r,i}}\eta_{\mathrm{l}}$ with the received $\overline{\tau}_r$ }
            \FOR{$j=0,\dots \tau_{r,i}-1$}
                \STATE {Perform local SGD update according to \eqref{adjusted update}}
            \ENDFOR
            \STATE {Send $\widetilde{\Delta}_{r,i} = -\sum_{j=0}^{\tau_{r,i}-1} \widetilde{\eta}_{\mathrm{l}} g_{r,i}^{j}$ back to the BS}
        \ENDFOR
        \STATE {BS receives $\widetilde{\Delta}_{r}=\frac{1}{M}\sum_{i \in \mathcal{M}_r}{\widetilde{\Delta}_{r,i}}$ and update the global model with ${\bf{w}}_{r+1} = {\bf{w}}_{r} + \eta_{\rm g}\widetilde{\Delta}_{r}$}
    \ENDFOR
    \RETURN{${{\bf{w}}_{R+1}}$}
    \end{algorithmic}
\end{algorithm} 

We describe the proposed FLARE framework with the joint device selection and bandwidth allocation strategy for WFL, as outlined in Algorithm \ref{WFL}. Specifically, each round of FL training starts with all devices sending their local computation and channel information to the BS.
After the BS determines the value of $\overline{\tau}_r$, it calls Alg. \ref{Alg scheduling} or its linear programming-based low-complexity alternative to decide the selected device subset $\mathcal{M}_r$, and sends the global model to the selected devices in $\mathcal{M}_r$. Then, the selected devices adjust their local learning rate $\eta_{\mathrm{l}}$ based on $\overline{\tau}_r$. After local computation, with the allocated bandwidths, the selected devices transfer their updated accumulated gradients to the BS for global model aggregation with the global learning rate $\eta_{\rm g}$.

\section{Experiment Results}
In this section, we empirically evaluate the proposed FLARE framework and scheduling strategy for WFL. 

\subsection{Experiment Setup}
We carry out experiments on the MNIST~\cite{MNIST} and CIFAR-10 \cite{cifar} datasets. 
In the case of MNIST, we employ a CNN for handwritten digit recognition. In the case of CIFAR-10, we use another CNN for image classification.
We consider both i.i.d. and non-i.i.d. data distributions for both datasets. In the non-i.i.d. case, the training data samples are sorted by label and distributed randomly among the devices.
Unless specified otherwise, we set the global learning rate \(\eta_{\rm g}=1\), and the local learning rate is set to $\eta_{\mathrm{l}} =0.005$ for MNIST and $\eta_{\mathrm{l}} =0.01$ for CIFAR-10. 
The batch size is $D=40$.

Consider a WFL edge network with $K=40$ devices uniformly distributed between $100$ m and $500$ m away from the BS. 
The path-loss exponent is 3.76. The total bandwidth $B$ is 10 MHz. 
The maximum transmit power of a device is $p_{i,\mathrm{max}}=20$ dBm. 
The power spectrum density of the AWGN is $N_0 = -114$ dBm/MHz at the BS. 
The model size $S$ is $1 \times 10^7$ or $6.4 \times 10^7$ bits for MNIST and CIFAR-10, respectively.
The CPU frequency of a device follows a uniform distribution between 2 GHz and 4 GHz. 
According to our experimental tests, {\color{black}the required number of CPU cycles for computing a data sample is specified empirically to be $110$ cycles/bit for MNIST and $85$ cycles/bit for CIFAR-10.}
To simulate device heterogeneity, we assume that the number of local updates $\tau_{r,i}$ of each device $i$ follows an exponential distribution with the mean value $\tau$ \cite{reisizadeh2020fedpaq}. 
All experiments are conducted on a computer with
an Intel 13700K processor and 64 GB memory, running Python 3.9, Numpy 1.23.3, and PyTorch 1.13.0, installed on a Windows 10 operating system.

\subsection{Evaluation of FLARE}

Fig. 3 demonstrates the benefit of the local learning rate adjustments in FLARE in the presence of device and data heterogeneity, where we assume that the wireless bandwidth is sufficient.
The following baselines are considered. 

\begin{figure}[t!]
	\centering
	\includegraphics[width=0.8\linewidth]{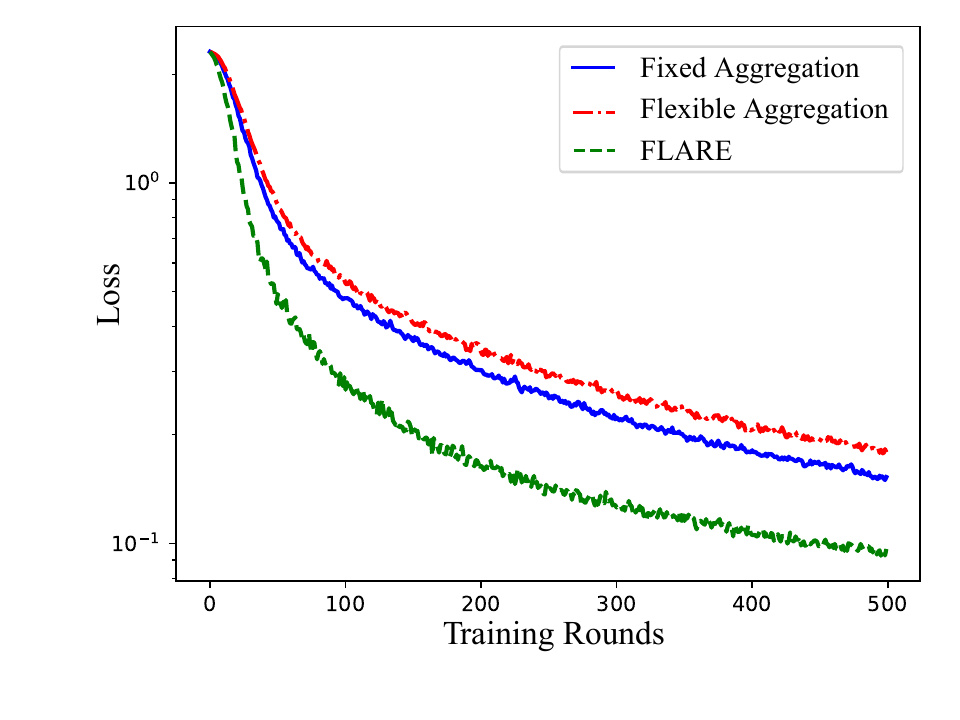}
	\caption{Comparison of training loss on non-i.i.d MNIST dataset with $K=60$, $M_r=20$, and uniform sampling. For the fixed aggregation, we have $\tau_{r,i}=7, \forall r \in \mathcal{M}_r$. For the flexible aggregation, every 20 devices among the $K$ devices perform 12, 6, and 3 local updates, respectively. For FLARE, $\bar{\tau}_r$ is set as the maximal local updates among the selected devices in each round.}
	\label{fig:loss}
\end{figure}

\begin{itemize}
    
    \item 
    Fixed Aggregation: The same number of local training iterations with persistent learning rates is performed at the selected devices per round.

    \item Flexible Aggregation: Different numbers of local training iterations with persistent learning rates can be performed at the selected devices per round.

    
\end{itemize}
We see that FLARE dramatically outperforms the other two settings, attributed to the adaptively adjusted local learning rates of the selected devices. 

Under FLARE, we further consider the following strategies that the BS can take to determine $\overline{\tau}_r$ in each round $r$.
\begin{itemize}
    \item Max strategy (MaS): $\overline{\tau}_r = \mathop {\max }\limits_{i\in \mathcal{M}_r} \{ {\tau _{r,i}}\} $ for $ r\geq 1$;
    \item Mean strategy (MeS): $\overline{\tau}_r = \mathop {\mathrm{mean} }\limits_{i\in \mathcal{M}_r} \{ \tau _{r,i} \} $ for $ r\geq 1$;
    \item Fixed max strategy (FMaS): $\overline{\tau}_r = \mathop {\max }\limits_{i\in \mathcal{M}_r} \{ {\tau _{1,i}}\} $ for $ r>1$;
    \item Fixed mean strategy (FMeS): $\overline{\tau}_r = \mathop {\mathrm{mean} }\limits_{i\in \mathcal{M}_r} \{ \tau _{1,i} \} $ for $ r>1$.   
\end{itemize}
Here, FMaS and FMeS suggest that the BS directly utilizes $\{\bar{\tau}_{1,i}\}$ from the first training round in various ways.

Fig. \ref{fig_tau} shows the convergence of FLARE with $\tau=3$. A random, uniform selection of 10 out of $K=40$ devices and a non-uniform device selection are considered.
As for the non-uniform device selection, we divide all devices into four groups before training. The probabilities of devices being selected in the four groups are $0.05, 0.15, 0.2$, and $0.6$ per round.
MaS and MeS outperform FedAvg upon convergence, by up to 4.5\% and 9.4\% under the i.i.d. and non-i.i.d. settings, respectively. 
MaS converges faster than MeS because $\bar{\tau}_r$ is larger in most rounds and hence the test accuracy is better under MaS.
In contrast, MeS allows for a stable accuracy growth, as it reduces $\phi_1$ and $\phi_2$ in \eqref{one round error}. 
Moreover, FLARE tolerates non-uniform device selection with limited accuracy loss, e.g., only 2.4\% under MaS, as revealed by comparing Figs. \ref{fig_tau}(b) and \ref{fig_tau}(d).

Under FMaS and FMeS, $\overline{\tau}_r$ is adjusted. FMeS tends to offer greater robustness since no excessively large $\overline{\tau}_r$ is produced.
However, the gain of FMeS is marginal, compared to FedAvg. Additionally, FMaS and FMeS only introduce negligible additional communications between the BS and devices in the first training round, making them slightly more communication-efficient than MaS and MeS.

\begin{figure}[t!]
	\centering
	\subfloat[{\centering Accuracy on i.i.d. MNIST with uniform sampling.}]{\includegraphics[width=1.71in]{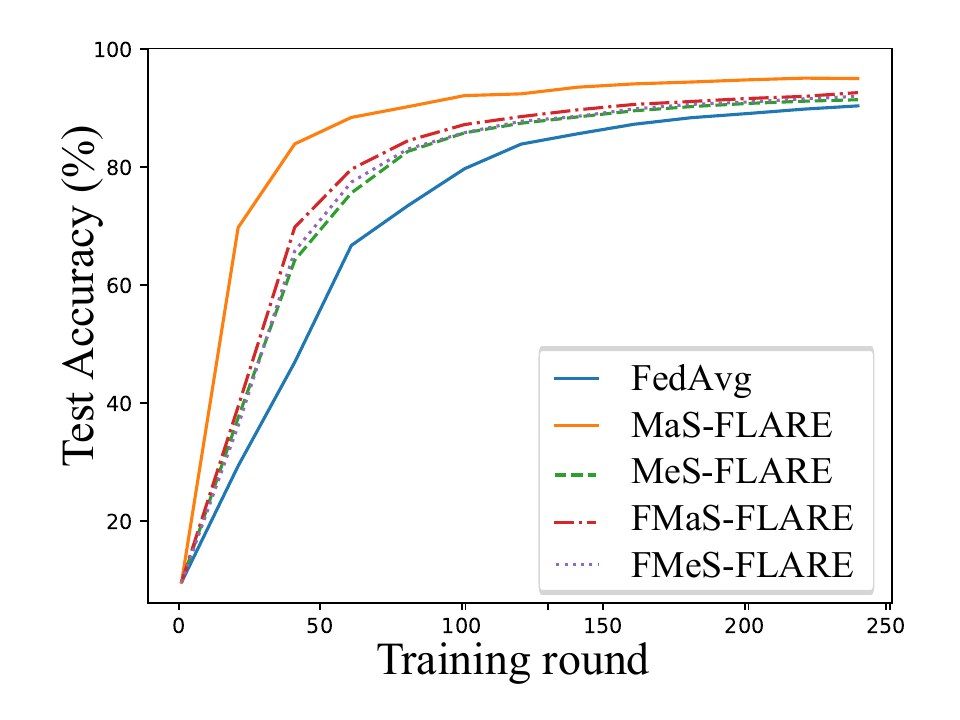}
		\label{fig_iid_case}} 
	\subfloat[{\centering Accuracy on non-i.i.d. MNIST with uniform sampling.}]{\includegraphics[width=1.72in]{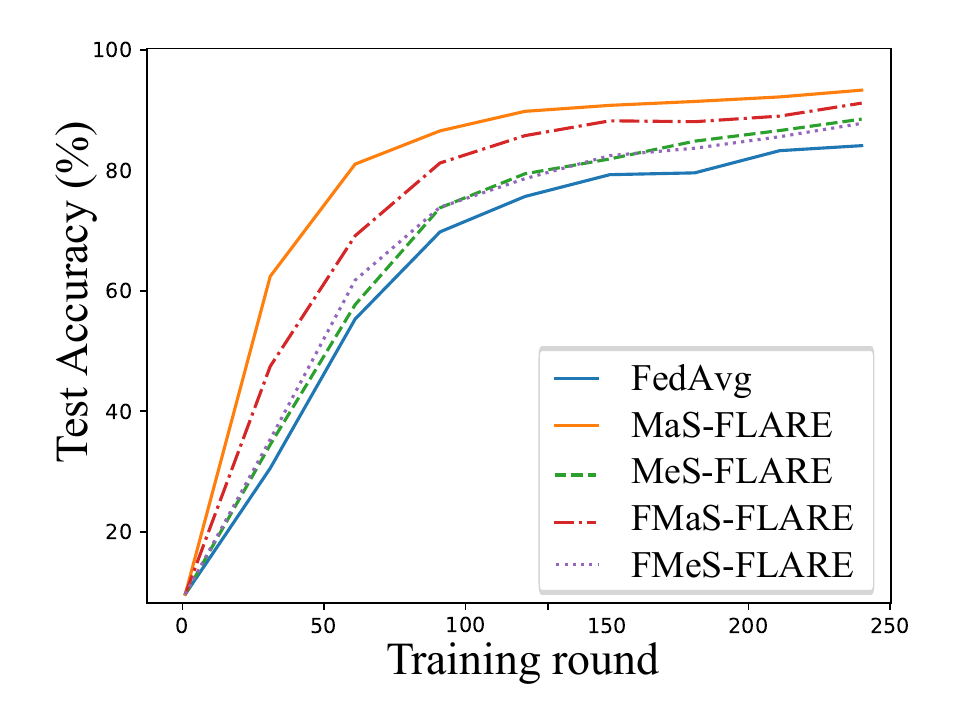}
		\label{fig_non_iid_case}}
        \hfil
        \subfloat[{\centering Accuracy on i.i.d. MNIST with non-uniform sampling.}]{\includegraphics[width=1.71in]{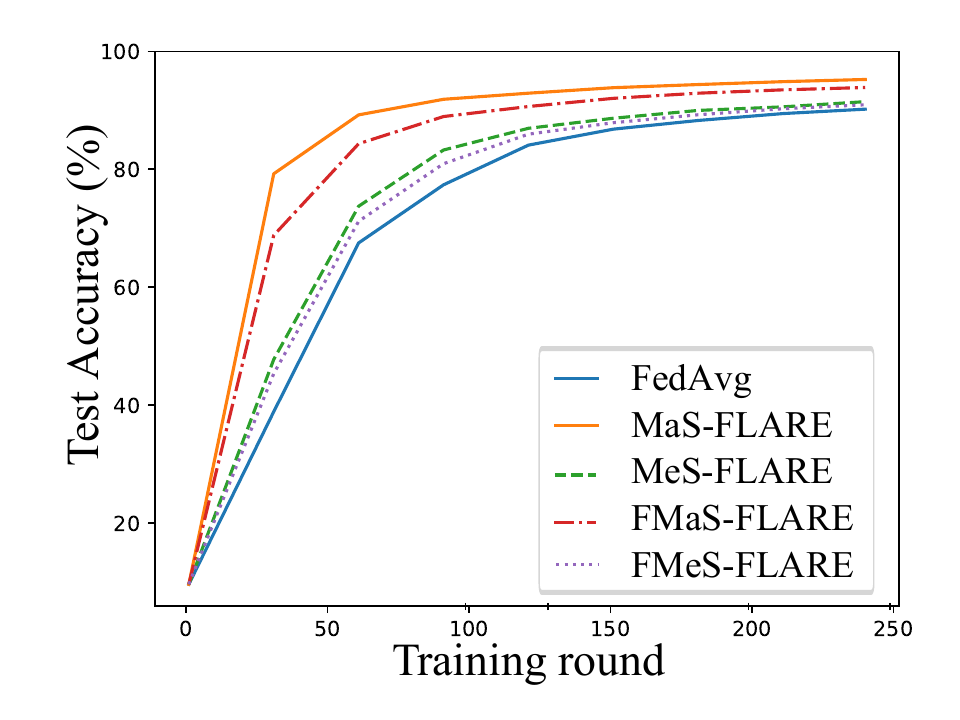}
		\label{fig_iid_case_non_uni}}
        \subfloat[{\centering Accuracy on non-i.i.d. MNIST with non-uniform sampling.}]{\includegraphics[width=1.71in]{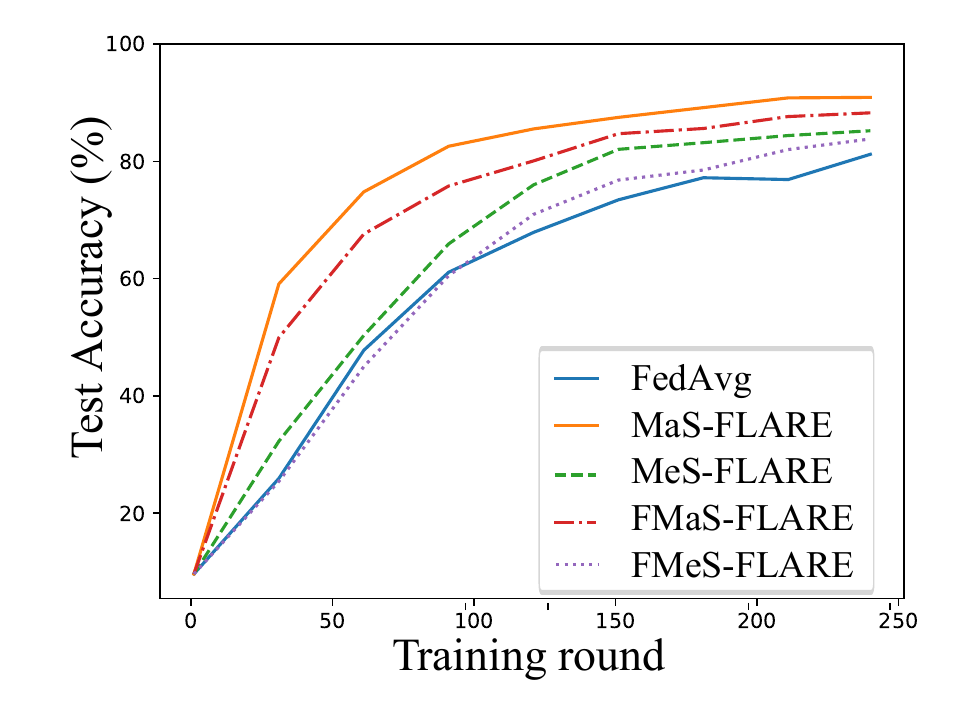}
		\label{fig_non_iid_case_non_uni}}
	\caption{The performance of different $\bar{\tau}_r$-strategies of FLARE on MNIST with both uniform and non-uniform sampling.}
	\label{fig_tau}
\end{figure}

\subsection{Comparison with State-Of-The-Art}
We proceed to evaluate the FLARE framework for resource-constrained WFL, by considering the following benchmarks.

\begin{itemize}
    \item Pre-tuned Scheduling (PS) \cite{9207871}: The BS randomly selects $M_r$ devices in each round $r$. The optimal value of $M_r$ is pre-tuned experimentally.

    \item Channel Proportion (CP) \cite{8851249}: The BS selects the device subset $\mathcal{M}_r$ with the best instantaneous channel qualities iteratively until the delay constraint \eqref{constraint 2} is satisfied;
    that is, based on the channel states of the devices, the BS selects devices one after another and utilizes \textbf{Theorem~3} to allocate the available bandwidth until $t_{\mathrm{thr}}$ is reached.
    
    \item Device Maximization (DM) \cite{8761315}: The BS iteratively selects the devices that minimize the growth of the total delay $t_r$, and then distributes the available bandwidth $B$ evenly among the selected devices.
    \item Computation Minimization (CM) \cite{9139810}: The BS ranks the expected computing times of all devices, followed by a binary search for the largest device subset satisfying all constraints. 
\end{itemize}
For fair comparisons,  the optimal bandwidth allocation policy in \textbf{Theorem 3} is also applied in PS, CP, and CM. We consider two ways to implement the proposed scheduling policy under FLARE. The one utilizing the MaS strategy to determine $\overline{\tau}_r$ round-by-round is dubbed SPF. The one without the learning rate adjustments is called SP.

\begin{figure}[!t]
\centering
\subfloat[{\centering Accuracy on i.i.d. MNIST with $t_{\mathrm{thr}} = 0.4$.}]{\includegraphics[width=1.7in]{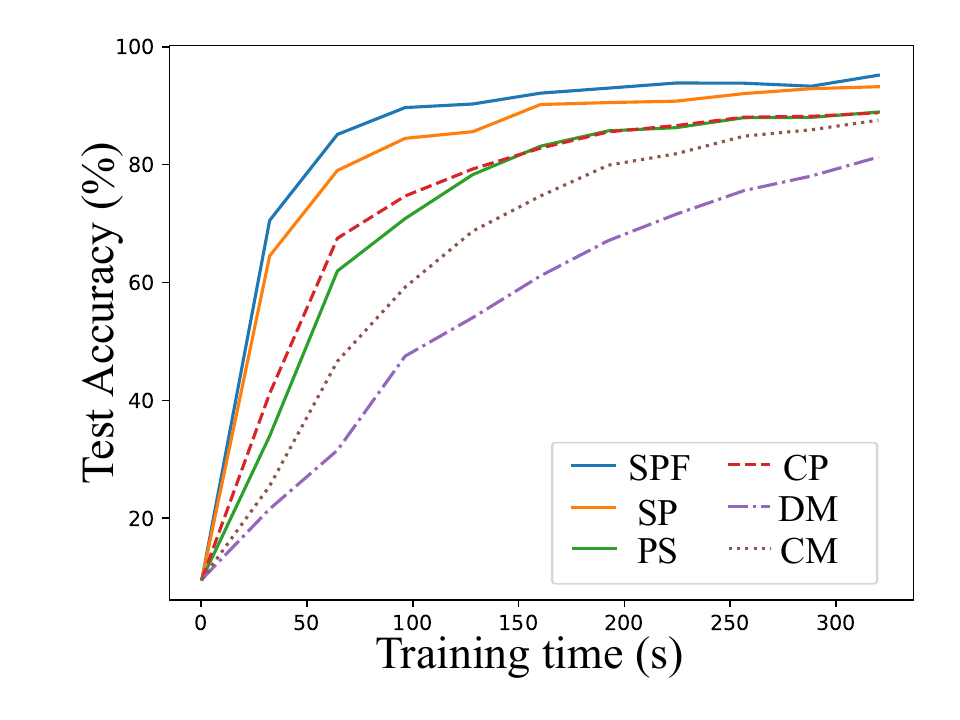}%
\label{fig_5_1}}
\subfloat[{\centering Accuracy on non-i.i.d. MNIST with $t_{\mathrm{thr}} = 0.4$.}]{\includegraphics[width=1.7in]{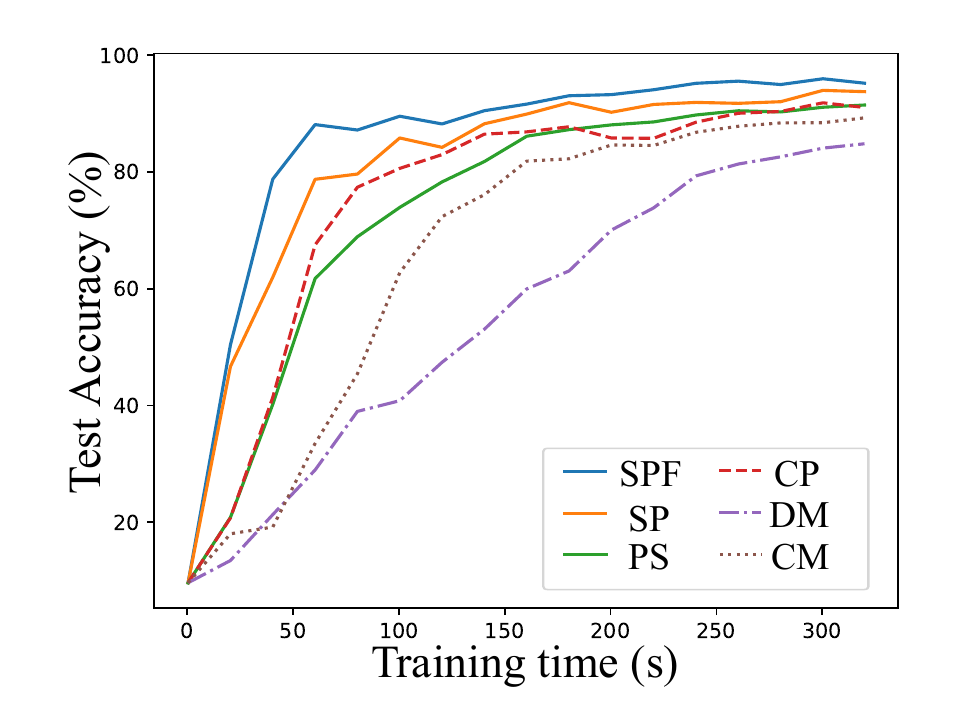}%
\label{fig_5_2}}
\hfil
\subfloat[{\centering Accuracy on i.i.d. MNIST with $t_{\mathrm{thr}} = 1$.}]{\includegraphics[width=1.7in]{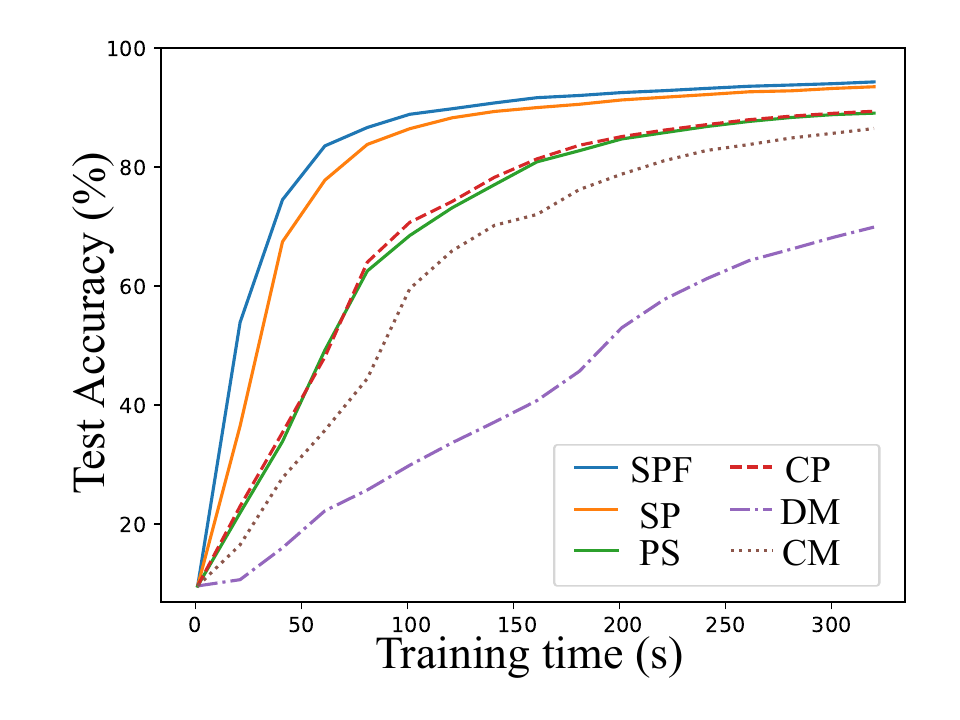}%
\label{fig_5_3}}
\subfloat[{\centering Accuracy on non-i.i.d. MNIST with $t_{\mathrm{thr}} = 1$.}]{\includegraphics[width=1.68in]{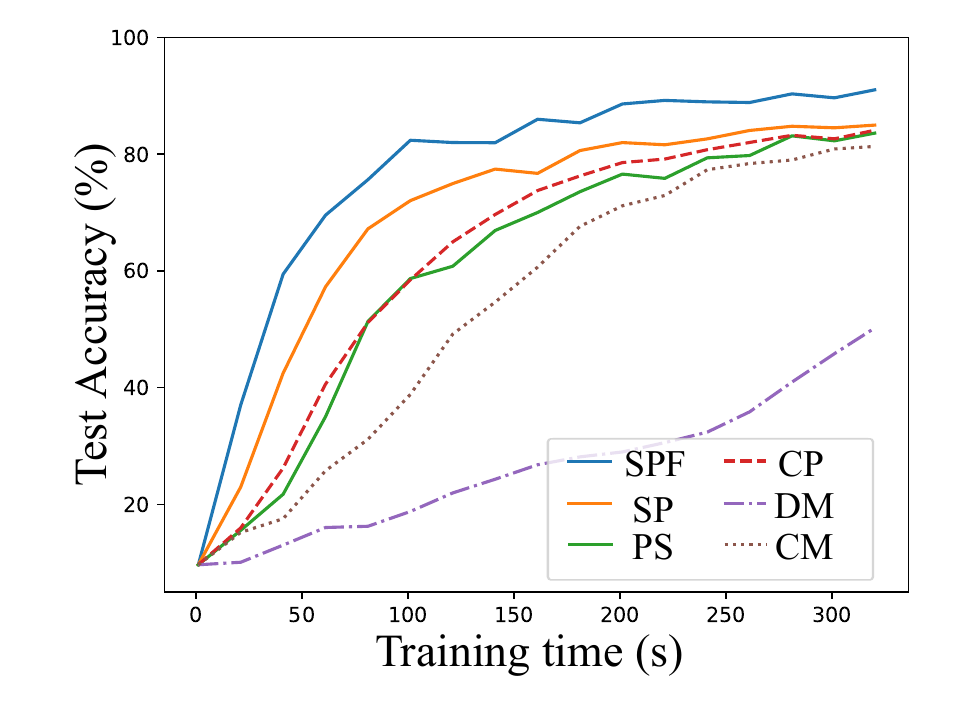}%
\label{fig_5_4}}
\caption{Convergence performance of different policies on MNIST.}
\label{fig_5}
\end{figure}

The convergence performances of the different policies and datasets are plotted in Figs. 5 and 6. For MNIST, two settings are considered: $t_{\mathrm{thr}}=0.4$ and $1$. For CIFAR-10, $t_{\mathrm{thr}}=1.5$ and $5$.
In both figures, SPF and SP consistently outperform the benchmarks under both i.i.d. and non-i.i.d. datasets.
CM performs the worst, followed by DM in all cases. This is due to their tendency to favor devices with insufficient local updates, thus hindering convergence,
even though DM considers channel states.
Similarly, CP focuses solely on the channel strengths of the devices and 
overlooks their different computational abilities and unequal contributions to model updates,
while PS lacks adaptability to time-varying environments due to a persistent number of devices selected per round.

By comparing SPF and SP, it is evident that the inclusion of FLARE consistently improves performance, especially under the non-i.i.d. settings. This can be attributed to the exacerbation of differences between the local models induced by heterogeneous updates in non-i.i.d. distributions, leading to an increased performance gap between SP and SPF.

\begin{figure}[!t]
\centering
\subfloat[{\centering Accuracy on i.i.d. CIFAR-10 with $t_{\mathrm{thr}} = 1.5$.}]{\includegraphics[width=1.7in]{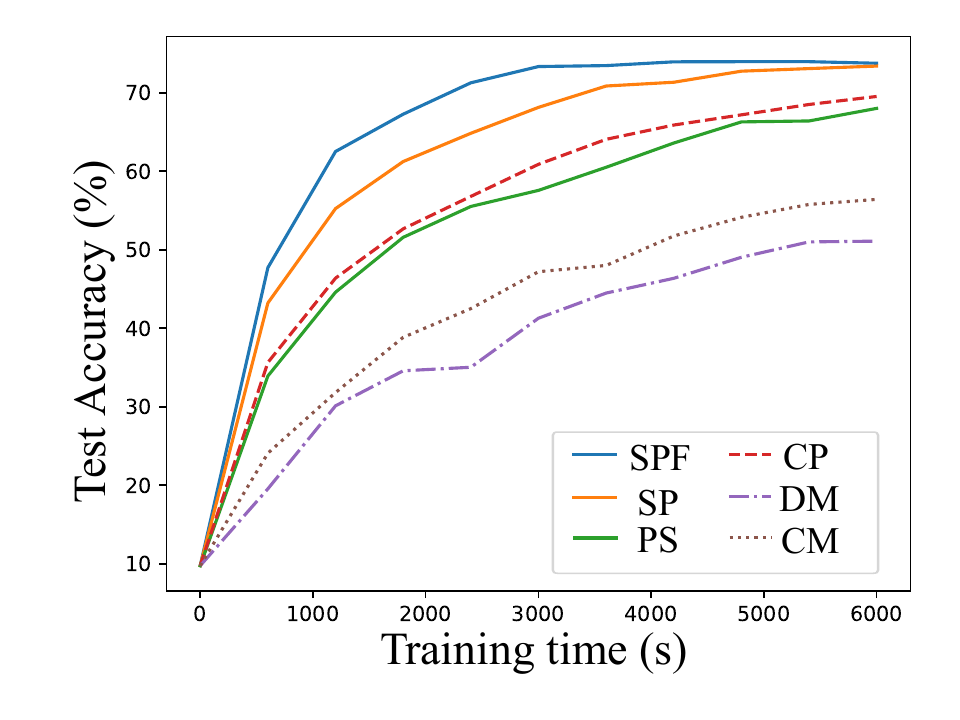}%
\label{fig_6_1}}
\subfloat[{\centering Accuracy on non-i.i.d. CIFAR-10 with $t_{\mathrm{thr}} = 1.5$.}]{\includegraphics[width=1.7in]{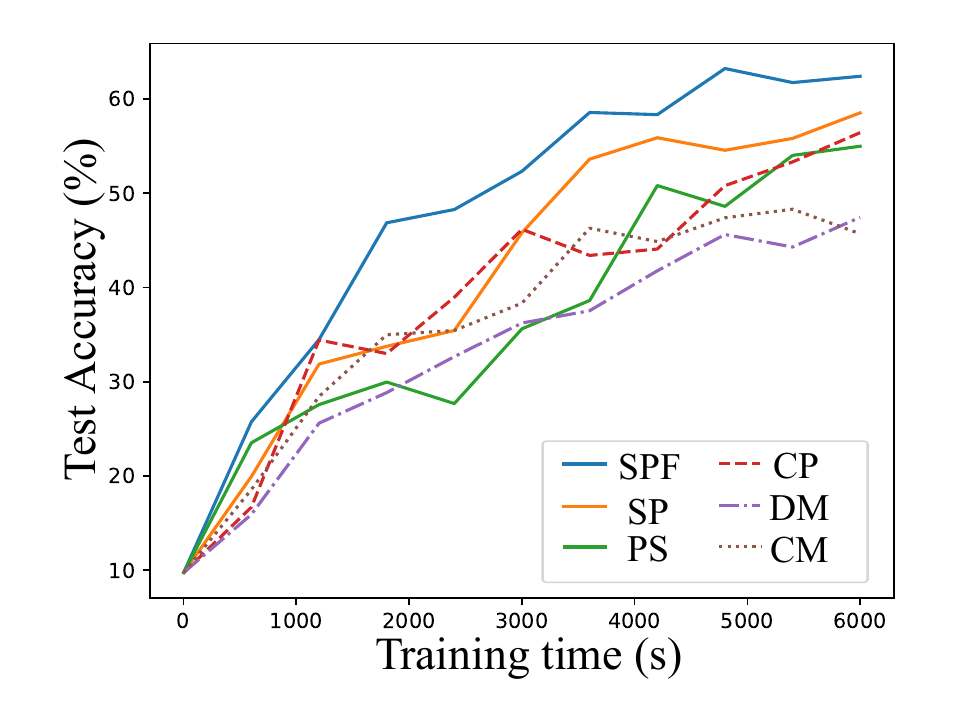}%
\label{fig_6_2}}
\hfil
\subfloat[{\centering Accuracy on i.i.d. CIFAR-10 with $t_{\mathrm{thr}} = 5$.}]{\includegraphics[width=1.7in]{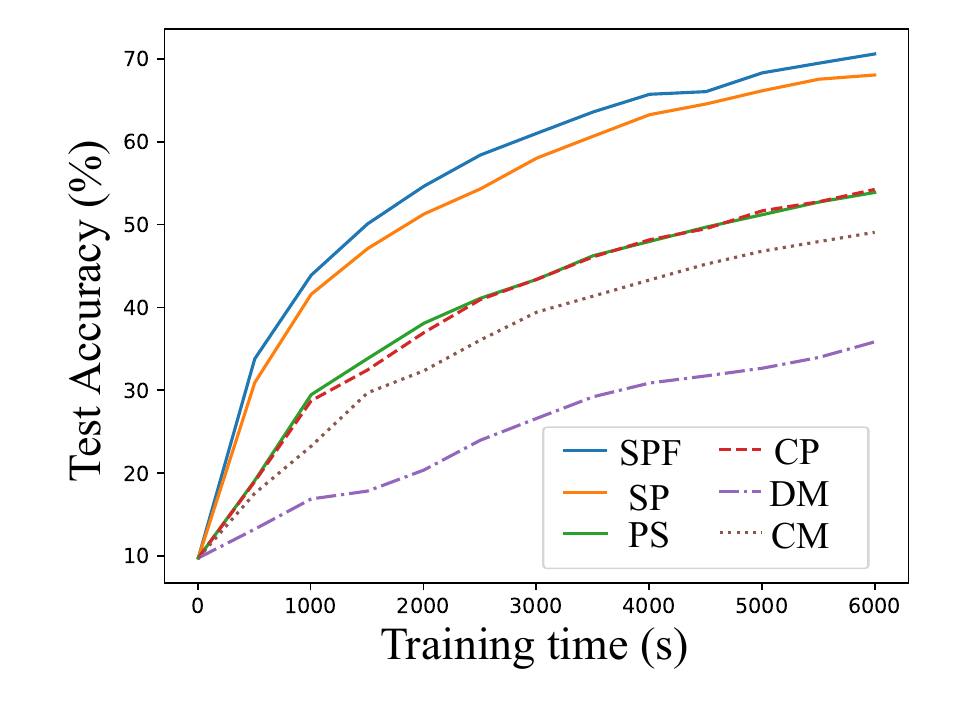}%
\label{fig_6_3}}
\hfil
\subfloat[{\centering Accuracy on non-i.i.d. CIFAR-10 with $t_{\mathrm{thr}} = 5$.}]{\includegraphics[width=1.7in]{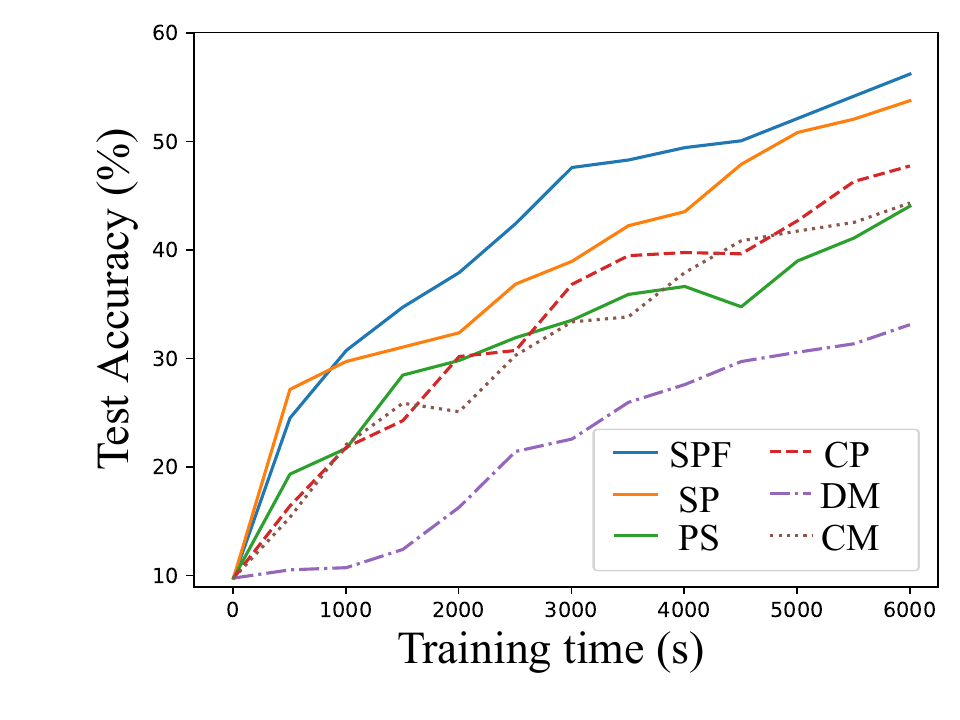}%
\label{fig_6_4}}
\caption{Convergence performance of different policies on CIFAR-10. }
\label{fig_6}
\end{figure}

We examine the average number of participating devices per round and their local update numbers under different policies in Fig. 7. It is noticed that SPF/SP demonstrates consistently the highest number of local updates, 
and maintains a sufficient number of participating devices
in both the cases depicted in Figs. 7(a) and 7(b). This indicates that it is crucial to select devices with a larger number of local updates (and hence higher contributions) while ensuring adequate device participation,  to accelerate the convergence.

\begin{figure}[!t]
	\centering
	\subfloat[non-i.i.d. MNIST.]{\includegraphics[width=1.698in]{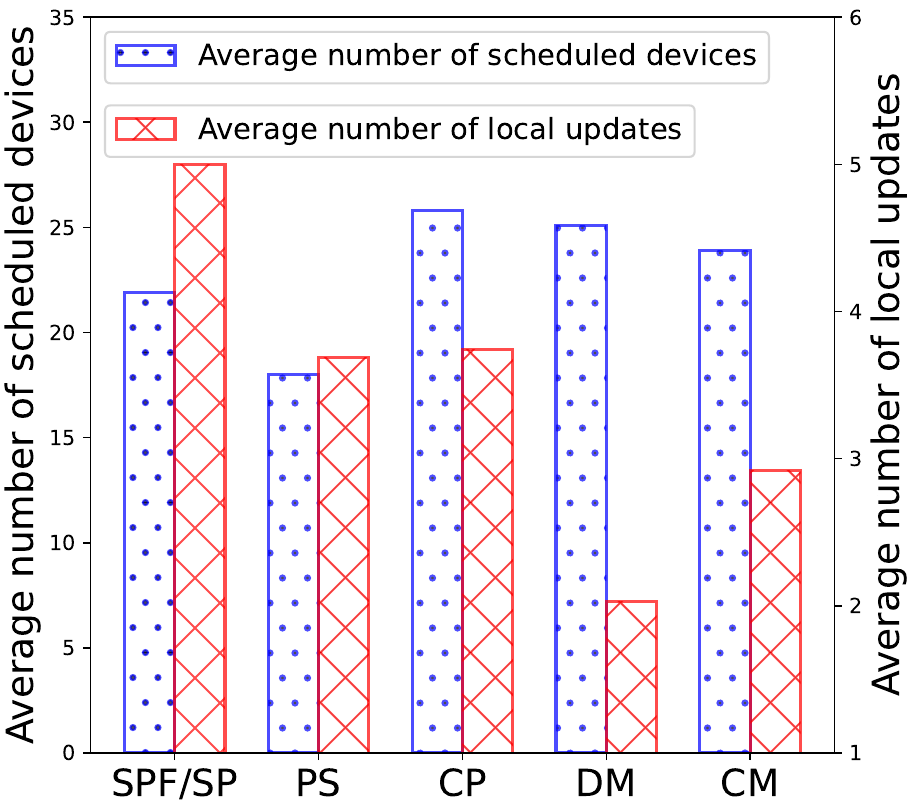}
		\label{fig_7_1}}
	\subfloat[non-i.i.d. CIFAR-10.]{\includegraphics[width=1.698in]{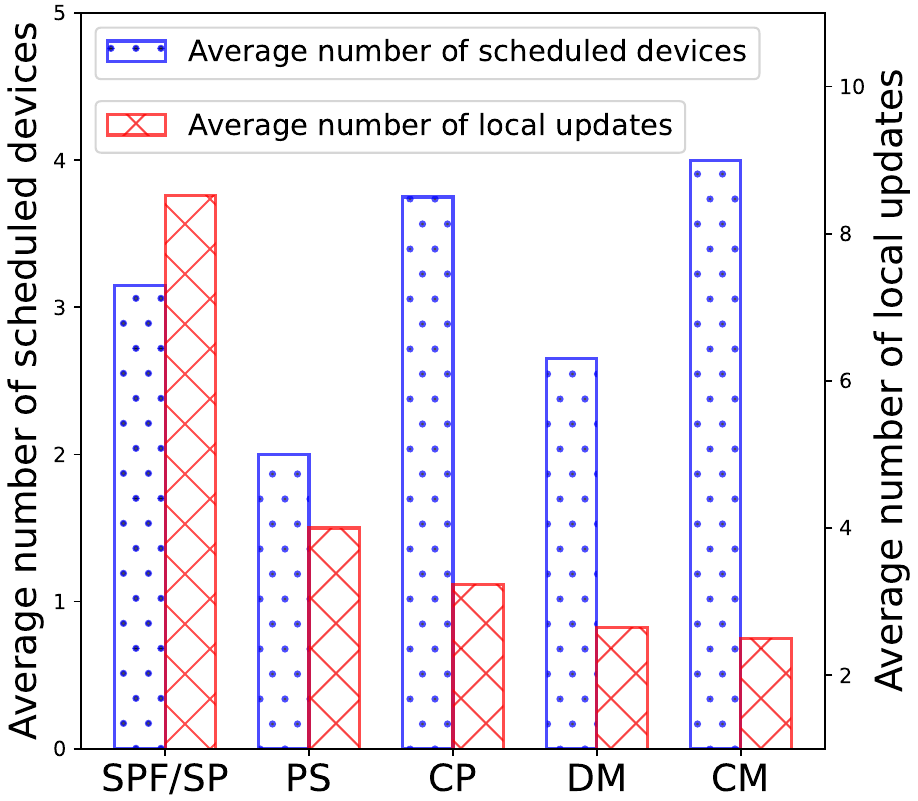}
		\label{fig_7_2}}
	\caption{Average participated devices per round and local updates per device of different policies with $t_{\mathrm{thr}}=2$. For SPF/SP, the participated device subset is determined by calling Alg. 1 at the BS side in each round.}
	\label{fig_7}
\end{figure}

\begin{figure}[!t]
	\centering
	\includegraphics[width=0.82\linewidth]{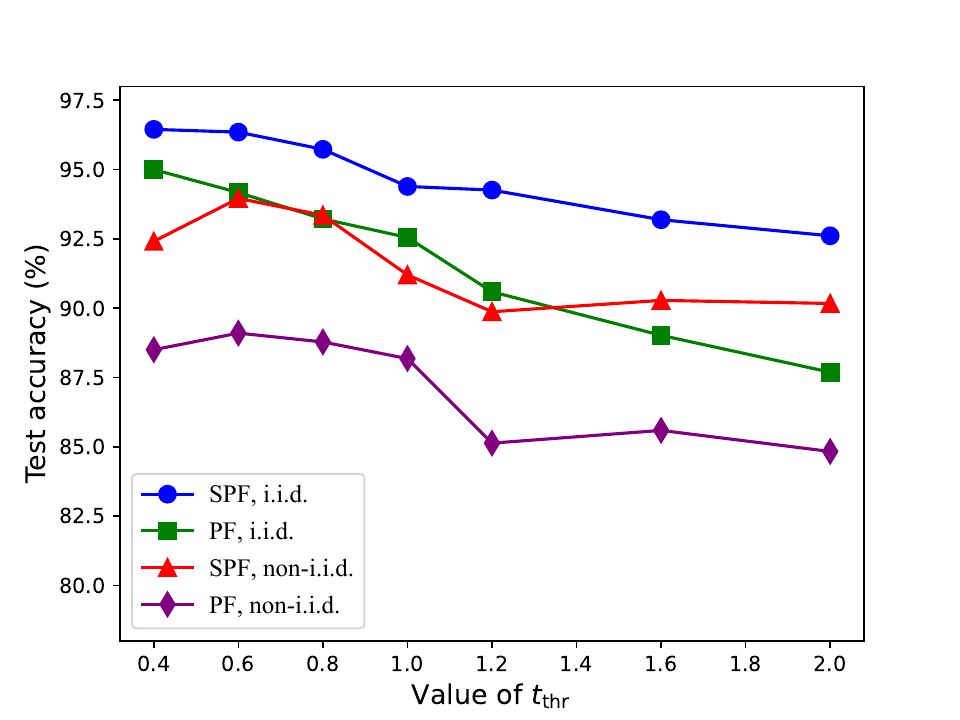}
	\caption{The maximum achievable accuracy on i.i.d. and non-i.i.d. MNIST dataset. The total training time $t$ is 300s.}
	\label{fig_8}
\end{figure} 

Last but not least, we investigate the impact of $t_{\mathrm{thr}}$ on the achievable accuracy of SPF and SP within a fixed training time. Specifically, we explore the values of $t_{\mathrm{thr}}$ from $\{ 0.4, 0.6, 0.8, 1, 1.2, 1.6, 2\}$. 
As $t_{\mathrm{thr}}$ increases, the system becomes more tolerant of larger $t_{r,i}^{\mathrm{comp}}$ and $t_{r,i}^{\mathrm{comm}}$, resulting in an increased number of devices selected according to Alg. \ref{Alg scheduling}. 
Meanwhile, the number of training rounds decreases.
As shown in Fig. 8, in the i.i.d. case where data heterogeneity does not exist, the performance degenerates with $t_{\mathrm{thr}}$ since the enlarging subset of selected devices cannot fully compensate for the reduction in the number of training rounds under SPF and PF.
This contrasts with the non-i.i.d. case, where the optimal performance is acquired at $t_{\mathrm{thr}}=0.6$, indicating a trade-off between training time and training rounds. 
Consistent accuracy is also attained when $t_{\mathrm{thr}} > 0.6$. This stability stems from the fact that the increasing number of devices can counteract effectively the adverse effect of data heterogeneity.

\section{Conclusion}
In this paper, we presented the FLARE framework specifically designed to address data, device, and channel heterogeneity and expedite WFL training. Our contribution included a comprehensive convergence analysis of FLARE with non-convex loss functions under data and device heterogeneity and arbitrary device scheduling policy.
We optimized the device selection and bandwidth allocation of FLARE in the face of resource constraints by minimizing the convergence upper bound. Efficient solutions were developed by revealing a nested optimization structure of the intended problem, as well as a much simpler linear structure when the neural network models have large Lipschitz constants. 
Experiments corroborated the efficacy of FLARE in mitigating heterogeneity across diverse system settings, and that the proposed scheduling policy consistently outperformed the state of the art in test accuracy.

\section*{Appendix A \\ Proof of Theorem 1}
We begin with the following general assumption:
\begin{equation}\label{App A 1}
\mathbb{E}_{\xi}\left[\|\nabla f(\mathbf{x} ; \xi)-\nabla f(\mathbf{x})\|^2\right] \leq \sigma^2 \text{ and } \|\nabla f(\mathbf{x})\|^2 \leq g.
\end{equation}  
Setting the local learning rate $\widetilde{\eta_{\mathrm{l}}} = \eta_{\mathrm{l}} \bar{\tau}_r / {\tau}_r$ in round $r$ and denoting $\xi^i$ as the local data used for the $i$-th local iteration in an SGD round, we have
\begin{align}
&\left\|\widetilde{\boldsymbol{\rm{w}}} \!-\! \boldsymbol{\rm{w}}\right\|^2  \!=\!  \left\|\widetilde{\eta}_{\mathrm{l}}\sum_{i=0}^{{\tau_r}-1}\nabla f(\widetilde{\mathbf{w}}^i,\xi^i) -\eta_{\mathrm{l}}\sum_{j=0}^{\bar{\tau}_r-1}\nabla f(\mathbf{w}^j,\xi^j)  \right\|^2 \notag \\
&\!\!\!\! \overset{(a)}{\leq} 2 \tau_r\widetilde{\eta}_{\mathrm{l}}^2\sum_{i=0}^{{\tau_r}-1}\left\|\nabla f(\widetilde{\mathbf{w}}^i,\xi^i)-\nabla f(\widetilde{\mathbf{w}}^i) + \nabla f(\widetilde{\mathbf{w}}^i)\right\|^2  \notag \\
&+ 2\bar{\tau}_r{\eta_{\mathrm{l}}^2}\sum_{j=0}^{\bar{\tau}_r-1}\left\|\nabla f(\mathbf{w}^j,\xi^j)-\nabla f(\mathbf{w}^j) + \nabla f(\mathbf{w}^j)\right\|^2 \notag \\
&\!\!\!\!\overset{(b)}{\leq} 4 {\tau}_r\widetilde{\eta}_{\mathrm{l}}^2\sum_{i=0}^{{\tau}_r-1}\left(\left\|\nabla f(\widetilde{\mathbf{w}}^i,\xi^i)\!-\!\nabla f(\widetilde{\mathbf{w}}^i)\right\|^2 \!+\! \left\|\nabla f(\widetilde{\mathbf{w}}^i)\right\|^2 \right) \notag \\
&\!\!\! + \! 4\bar{\tau}_r{\eta_{\mathrm{l}}^2}\sum_{j=0}^{\bar{\tau}_r-1}\left(\left\|\nabla f(\mathbf{w}^j,\xi^j)\!-\!\nabla f(\mathbf{w}^j) \right\|^2 \!+\! \left\|\nabla f(\mathbf{w}^j)\right\|^2 \right)\nonumber \\
&\overset{(c)}{\leq} 2 \cdot 4\bar{\tau}_r^2\eta_{\mathrm{l}}^2(g + \sigma^2) = 8\eta_{\mathrm{l}}^2\bar{\tau}_r^2(g \!+\! \sigma^2). 
\end{align}
where (a) and (b) are obtained because of the Cauchy-Schwartz inequality, and (c) is obtained by substituting \eqref{App A 1}. 

\section*{Appendix B \\ Proof of Lemma 1}

Since $f_i$, $\forall i \in \mathcal{K}$ is $L$-smooth, $F({\bf{w}})$ is $L$-smooth. Since $\nabla F$ is also Lipschitz continuous, it readily leads to
\begin{equation} \label{overall 1}
{\mathbb{E}[F({{\bf{w}}_{r \!+\! 1}})\!-\!F({{\bf{w}}_r})] \leq \underbrace{\mathbb{E}\left[\!\left\langle {\nabla F({{\bf{w}}_r}),{\eta_{\rm g}\Delta _r}} \!\right\rangle \right]}_{A_1} \!+ {\underbrace {\frac{L}{2}\mathbb{E}\left[\!{{\left\| {{\eta_{\rm g}\Delta _r}} \right\|}^2}\!\right]}_{A_2} }},
\end{equation}
where $\mathbb{E}\left[ \cdot \right]$ takes expectation over random minibatch sampling.

The upper bound of $A_1$ is given by
\begin{align} \label{A1_}
&A_1 = -\eta_{\rm g}\eta_{\mathrm{l}}\left\langle {\nabla F({{\bf{w}}_r}),\mathbb{E}\left[\sum_{i \in \mathcal{M}_r}\frac{1}{M_r}\sum_{j=0}^{\tau_{r,i}-1}\frac{\bar{\tau}_r}{\tau_{r,i}}g_{r,i}^j\right] } \right\rangle  \nonumber    \\ 
&= -\eta_{\rm g}\eta_{\mathrm{l}}\left\langle {\sqrt{\bar{\tau}_r} \nabla F({{\bf{w}}_r}),\frac{1}{M_r}\sum_{i \in \mathcal{M}_r}\sum_{j=0}^{\tau_{r,i}-1}\frac{\sqrt{\bar{\tau}_r}}{\tau_{r,i}}\nabla f_i({\bf{w}}_{r,i}^j) } \right\rangle   \nonumber    \\ 
&\overset{(a)}{\leq }\! a_1 \! +\! \frac{\eta_{\rm g}\eta_{\mathrm{l}}}{2}\left\|\frac{1}{M_r}\!\!\!\sum_{i \in \mathcal{M}_r}\!\!\sum_{j=0}^{\tau_{r,i}-1}\!\!\frac{\sqrt{\bar{\tau}_r}}{\tau_{r,i}} \left(\nabla f_i({\bf{w}}_{r,i}^j)\!-\!\nabla f_i({{\bf{w}}_r}) \right) \right\|^2  \nonumber    \\ 
&\overset{(b)}{\leq }  a_1 +  \underbrace{ {\frac{L^2\eta_{\rm g}\eta_{\mathrm{l}}}{2M_r}\sum_{i \in \mathcal{M}_r} \frac{\bar{\tau}_r}{\tau_{r,i}}\sum_{j=0}^{\tau_{r,i}-1}\mathbb{E}\left[\left\| {\bf{w}}_{r,i}^j-{{\bf{w}}_r}\right\|^2\right]} }_{B_1},
\end{align}
where $a_1=-\frac{\bar{\tau}_r\eta_{\rm g}\eta_{\mathrm{l}}}{2}\left\|\nabla F({{\bf{w}}_r}) \right\|^2 $. Here, (a) is due to $-\!2\left\langle {\bf{a}},{\bf{b}} \right\rangle \!\leq\! -\| {\bf{a}} \|^2\! + \!\| {\bf{a}}\!-\!{\bf{b}} \|^2$ with ${\bf{a}}\! =\! \sqrt{\bar{\tau}_r} \nabla F({{\bf{w}}_r})$ and ${\bf{b}} \!=\!\frac{1}{M_r}\sum_{i \in \mathcal{M}_r}\sum_{j=0}^{\tau_{r,i}\!-\!1}\frac{\sqrt{\bar{\tau}_r}}{\tau_{r,i}}\nabla f_i({\bf{w}}_{r,i}^j) $; and (b) is from \textbf{Assumption~1}.

To derive the upper bound of $B_1$, we start with the upper bound of $\mathbb{E}\left[ \left\|{\bf{w}}_{r,i}^j-{{\bf{w}}_r} \right\|^2 \right] \forall i,j $. Note that $\mathbb{E}\left[ \left\|{\bf{w}}_{r,i}^j-{{\bf{w}}_r} \right\|^2 \right] = 0$
when $j=0$, since $ \boldsymbol{\rm{w}}_{r,i}^0=\boldsymbol{\rm{w}}_r $ by definition. When $j\geq1$, we have
\begin{align}\label{single bias 1}
   & \mathbb{E}\left[ \left\|{\bf{w}}_{r,i}^j-{{\bf{w}}_r} \right\|^2 \right]  = \mathbb{E}\left[ \left\|{\bf{w}}_{r,i}^{j-1}-\frac{\bar{\tau}_r}{\tau_{r,i}}\eta_{\mathrm{l}}{g}_{r,i}^{j-1}-{\bf{w}}_{r} \right\|^2 \right]  \nonumber    \\
    &\overset{(a)}{=} \mathbb{E}\left[ \left\|{\bf{w}}_{r,i}^{j-1}-\frac{\bar{\tau}_r}{\tau_{r,i}}\eta_{\mathrm{l}}\nabla {f}_i({\bf{{w}}}_{r,i}^{j-1})-{\bf{w}}_r \right\|^2 \right] + \frac{\bar{\tau}_r^2}{\tau_{r,i}^2}\sigma^2\eta_{\mathrm{l}}^2  \nonumber    \\ 
    &\overset{(b)}{\leq} \left(1+\frac{1}{b\tau_{r,i}-1}\right)\mathbb{E}\left[ \left\| {\bf{w}}_{r,i}^{j-1}-{\bf{w}}_r \right\|^2 \right] \nonumber    \\  
    &+ (1+b\tau_{r,i}-1)\frac{\bar{\tau}_r^2}{\tau_{r,i}^2}\eta_{\mathrm{l}}^2\left\|\nabla f_i({\bf{w}}_{r,i}^{j-1})\right\|^2 + \frac{\bar{\tau}_r^2}{\tau_{r,i}^2}\sigma^2\eta_{\mathrm{l}}^2  \nonumber    \\ 
    &\overset{(c)}{\leq} \left(1+\frac{1}{b\tau_{r,i}-1}\right)\mathbb{E}\left[ \left\| {\bf{w}}_{r,i}^{j-1}-{\bf{w}}_r \right\|^2 \right]+ \frac{\bar{\tau}_r^2}{\tau_{r,i}^2}\sigma^2\eta_{\mathrm{l}}^2  \nonumber    \\  
    &\!+\! \frac{2b\bar{\tau}_r^2\eta_{\mathrm{l}}^2}{\tau_{r,i}}\left\|\nabla f_i({\bf{w}}_{r,i}^{j-1})\!-\!\nabla f_i({\bf{w}}_r)\right\|^2 \!+\! \frac{2b\bar{\tau}_r^2\eta_{\mathrm{l}}^2}{\tau_{r,i}}\left\|\nabla f_i({\bf{w}}_r)\right\|^2\nonumber    \\
    &\overset{(d)}{\leq} \left( 1+\frac{1}{b\tau_{r,i}-1}+2b\frac{L^2\bar{\tau}_r^2}{\tau_{r,i}}\eta_{\mathrm{l}}^2 \right)\mathbb{E}\left[ \left\| {\bf{w}}_{r,i}^{j-1}-{\bf{w}}_r \right\|^2 \right] \nonumber    \\
    &+ \frac{2b\bar{\tau}_r^2\eta_{\mathrm{l}}^2}{\tau_{r,i}}\left\|\nabla f_i({\bf{w}}_r)\right\|^2 + \frac{\bar{\tau}_r^2}{\tau_{r,i}^2}\sigma^2\eta_{\mathrm{l}}^2  \nonumber    \\
    &\!\!\overset{(e)}{\leq}\! \frac{b\tau_{r,i} \!+\! 1}{b\tau_{r,i} \!-\! 1}\mathbb{E}\!\!\left[ \left\| {\bf{w}}_{r,i}^{j-1} \!\!-\!\! {\bf{w}}_r \right\|^2 \right] \! \!+\! \frac{\bar{\tau}_r^2\sigma^2\eta_{\mathrm{l}}^2}{\tau_{r,i}^2} \!+\!  \frac{2b\bar{\tau}_r^2\eta_{\mathrm{l}}^2}{\tau_{r,i}}\left\|\nabla f_i({\bf{w}}_r)\right\|^2,
    \end{align}
where (a) holds since $\mathbb{E}\left[\|\mathbf{x}\|^2\right]=\mathbb{E}\left[\|\mathbf{x}-\mathbb{E}[\mathbf{x}]\|^2\right]+\|\mathbb{E}[\mathbf{x}]\|^2$; (b) is due to the relaxed triangle inequality $\left\|{\bf{x}}_1+{\bf{x}}_2\right\|^2 \leq(1+\beta)\left\|{\bf{x}}_1\right\|^2+\left(1+\frac{1}{\beta}\right)\left\|{\bf{x}}_2\right\|^2, \forall \beta = b\tau_{r,i}-1 \geq 0$; 
(c) comes from the Cauchy-Schwartz inequality; (d) is based on \textbf{Assumption 1}; (e) holds under the condition that $2b\frac{L^2\bar{\tau}_r^2}{\tau_{r,i}}\eta_{\mathrm{l}}^2 \leq \frac{1}{b\tau_{r,i}-1}$. 
Without loss of generality, we set $b=1.2$  to ensure $\beta = b\tau_{r,i}-1>0$ since $\tau_{r,i} \geq 1$. 
Moreover, $\eta_{\mathrm{l}} \leq \frac{0.58}{L\bar{\tau}_r} $ is required by resolving the condition $2.4\frac{L^2\bar{\tau}_r^2}{\tau_{r,i}}\eta_{\mathrm{l}}^2 \leq \frac{1}{1.2\tau_{r,i}-1}$.

By unrolling the recursion of \eqref{single bias 1}, 
$B_1$ is upper bounded by
\begin{align}    \label{B1_}
&B_1 \leq \mathbb{E}\left[ \frac{L^2\eta_{\rm g}\eta_{\mathrm{l}}}{2 M_r}\sum_{i \in \mathcal{M}_r}\frac{\bar{\tau}_r e_{r,i}}{\tau_{r,i}} \right]\nonumber \\
&\overset{(a)}{\leq} \frac{L^2 \eta_{\rm g}\eta_{\mathrm{l}}}{2 M_r} \mathbb{E}\left[\sum_{i \in \mathcal{M}_r} 2.4 \bar{\tau}_r^3 \eta_{\mathrm{l}}^2\left\|\nabla f_i\left({\bf{w}}_r\right)\right\|^2+\frac{\bar{\tau}_r^3 \eta_{\mathrm{l}}^2}{\tau_{r,i}} \sigma^2\right]  \nonumber    \\
&\overset{(b)}{\leq}\! 1.2L^2\bar{\tau}_r^3{\eta_{\rm g}\eta_{\mathrm{l}} ^3}[{G^2} \!+\! {H^2}{\left\| {\nabla F({{\bf{w}}_r})} \right\|^2}] \!+\! \frac{{L^2\bar{\tau}_r^3{\eta_{\rm g}\eta_{\mathrm{l}} ^3}}}{2M_r}\!\sum\limits_{i \in \mathcal{M}_r} \!{\frac{\sigma^2}{{{\tau _{r,i}}}}},
\end{align}
where $e_{r,i}= \frac{1.2\tau_{r,i}-1}{2}\left[ \frac{1-3.2\tau_{r,i}}{2} + \frac{1.2\tau_{r,i}-1}{2} \left(\frac{1.2\tau_{r,i}+1}{1.2\tau_{r,i}-1}\right)^{\tau_{r,i}} \right]
\cdot \\ \left[ 
2.4\frac{\bar{\tau}_r^2}{\tau_{r,i}}\eta_{\mathrm{l}}^2\left\|\nabla f_i({\bf{w}}_r)\right\|^2 + \frac{\bar{\tau}_r^2}{\tau_{r,i}^2}\sigma^2\eta_{\mathrm{l}}^2 \right]$; 
(a) holds due to $\frac{1.2\tau_{r,i}-1}{2\tau_{r,i}^2}\left[ \frac{1-3.2\tau_{r,i}}{2} \!+\! \frac{1.2\tau_{r,i}-1}{2} \left(\frac{1.2\tau_{r,i}+1}{1.2\tau_{r,i}-1}\right)^{\tau_{r,i}} \right] < 1$, $\forall \tau_{r,i} \geq 1$; and (b) follows readily from \textbf{Assumption 3}.

Next, we derive the upper bound of $A_2$ in \eqref{overall 1}, as given by
\begin{align}\label{A2_}
&A_2  = \frac{L}{2}\eta_{\rm g}^2\eta_{\mathrm{l}}^2\mathbb{E}\left[\left\|\sum_{i \in \mathcal{M}_r}\frac{1}{M_r}\sum_{j=0}^{\tau_{r,i}-1}\frac{\bar{\tau}_r}{\tau_{r,i}}g_{r,i}^j\right\|^2\right] \nonumber \\
&\overset{(a)}{=} \frac{L}{2}\eta_{\rm g}^2\eta_{\mathrm{l}}^2\mathbb{E}\left[\left\|\frac{1}{M_r}\!\!\sum_{i\in \mathcal{M}_r}\sum_{j=0}^{\tau_{r,i}-1}\frac{\bar{\tau}_r}{\tau_{r,i}}\left(g_{r,i}^j \!- \! \nabla f_i({\bf{w}}_{r,i}^j)\right) \right\|^2\right]  \nonumber \\
&~~~+ \frac{L}{2}\eta_{\rm g}^2\eta_{\mathrm{l}}^2 \left\|\frac{1}{M_r}\sum_{i \in \mathcal{M}_r}\sum_{j=0}^{\tau_{r,i}-1}\frac{\bar{\tau}_r}{\tau_{r,i}}\nabla f_i({\bf{w}}_{r,i}^j) \right\|^2  \nonumber \\
&\overset{(b)}{\leq} \frac{L}{2}\eta_{\rm g}^2\eta_{\mathrm{l}}^2\frac{1}{M_r^2}\sum_{i \in \mathcal{M}_r}\frac{{\bar{\tau}_r}^2}{\tau_{r,i}^2}\sum_{j=0}^{\tau_{r,i}-1}\sigma^2  \nonumber \\
&~~~+ \frac{L}{2}\eta_{\rm g}^2\eta_{\mathrm{l}}^2 {\bar{\tau}_r}^2\left\|\frac{1}{M_r}\sum_{i \in \mathcal{M}_r}\sum_{j=0}^{\tau_{r,i}-1}\frac{1}{\tau_{r,i}}\nabla f_i({\bf{w}}_{r,i}^j) \right\|^2  \nonumber \\
&\overset{(c)}{\leq} \frac{L\eta_{\rm g}^2\eta_{\mathrm{l}}^2{\bar{\tau}_r}^2}{2M_r^2}\sigma^2\sum_{i \in \mathcal{M}_r}\frac{1}{\tau_{r,i}} + \frac{L{\bar{\tau}_r}^2\eta_{\rm g}^2\eta_{\mathrm{l}}^2}{2\rho^2}\left\| \nabla F({\bf{w}}_r) \right\|^2 ,
\end{align}    
where (a) is due to $\mathbb{E}\left[\|\mathbf{x}\|^2\right]=\mathbb{E}\left[\|\mathbf{x}-\mathbb{E}[\mathbf{x}]\|^2\right]+\|\mathbb{E}[\mathbf{x}]\|^2$; (b) is due to $\mathbb{E}\left[\left\|x_1+\cdots+x_n\right\|^2\right] \leq n\mathbb{E}\left[\left\|x_1\right\|^2+\cdots+\left\|x_n\right\|^2\right]$. 
We can further prove that there exists a constant $\rho>0$ such that
\begin{equation} \label{extended BGD2}
   \left\| \frac{1}{M_r}\sum_{i \in \mathcal{M}_r} \frac{1}{\tau_{r,i}} \nabla f_i(\boldsymbol{\rm{w}}_r)\right\|^2 \leq  \left\| \frac{1}{\rho} \nabla F(\boldsymbol{\rm{w}}_r)\right\|^2 , \forall r, i.
\end{equation}
Applying the Cauchy-Schwartz inequality and \textbf{Assumption 3} on the LHS of \eqref{extended BGD2} yields $\rho^2 \!\leq\! \frac{M_r\left\| \nabla F(\boldsymbol{\rm{w}}_r)\right\|^2}{\sum_{i \in \mathcal{M}_r}\frac{1}{\tau_{r,i}^2}\left(G^2 + H^2\left\| \nabla F(\boldsymbol{\rm{w}}_r)\right\|^2\right)}$. Setting $\tau_{r,i} = 1, \forall r,i$, the upper bound of $\rho$ is $\rho \leq \frac{1}{\sqrt{H^2 + G^2 / \left\| \nabla F(\boldsymbol{\rm{w}}_r)\right\|^2}} $ with a bounded global gradient $\left\| \nabla F(\boldsymbol{\rm{w}}_r)\right\|$, and (c) follows from~\eqref{extended BGD2}.

With the upper bound of $A_1$ based on \eqref{A1_} and \eqref{B1_} and the upper bound of $A_2$ based on \eqref{A2_}, \eqref{overall 1} can be reorganized into \eqref{one round error}, 
where we define a positive constant ${\mathrm{q}}_r$ that satisfies the condition $0 < {\mathrm{q}_{r}}<1-\frac{\bar{\tau}_r \eta_{\rm g} \eta_{\mathrm{l}}}{\rho^2}-2.4 H^2 L^2 \bar{\tau}_r^2 \eta_{\mathrm{l}}^2<1$. 

\section*{Appendix C \\ Proof of Theorem 2}

By adding $F({\bf{w}^{*}})$ to both sides of \eqref{one round error} and then reorganizing and taking expectation, we have
\begin{align} \label{septal terms cancel}
\mathbb{E}\left[\left\| \nabla F({\bf{w}}_r) \right\|^2\right] &\leq a_2 + \frac{2.4}{\mathrm{q}_r}G^2L^2\bar{\tau}^2\eta_{\mathrm{l}}^2 \\ \nonumber
&+ \frac{L\bar{\tau}\eta_{\mathrm{l}}}{\mathrm{q}_r}\left(\frac{ L \bar{\tau} \eta_{\mathrm{l}}}{ M_r} + \frac{\eta_{\rm g}}{M_r^2}\right) \sigma^2 \sum_{i \in \mathcal{M}_r} \frac{1}{\tau_{r, i}},
\end{align}
where $ a_2 = \frac{2\left( \mathbb{E}\left[F({\bf{w}}_r) - F({\bf{w}^*})\right] - \mathbb{E}\left[F({\bf{w}}_{r+1}) -F({\bf{w}^*})\right]\right)}{\bar{\tau}\eta_{\rm g}\eta_{\mathrm{l}} \mathrm{q}_r}$.
To derive an upper bound of $\sum_{i \in \mathcal{M}_r} \frac{1}{\tau_{r,i}}$ in \eqref{septal terms cancel}, we define $\kappa_r \buildrel \Delta \over = \mathop {\max }\limits_{r, i \in \mathcal{M}_r} \left\{ \frac{\bar{\tau}}{\tau_{r,i}} \right\}$, and rewrite \eqref{septal terms cancel} as
\begin{align} \label{septal terms cancel 2}
\mathbb{E}\left[\left\| \nabla F({\bf{w}}_r) \right\|^2\right] &\!\leq \! a_2 \!+ \!\frac{2.4}{\mathrm{q}_r}G^2L^2\bar{\tau}^2\eta_{\mathrm{l}}^2 
\!+\! \frac{L\eta_{\mathrm{l}}}{\mathrm{q}_r}\left(\! L \bar{\tau} \eta_{\mathrm{l}} \!+ \!\frac{\kappa_{r} \eta_{\rm g}}{M_r}\!\right) \sigma^2 .
\end{align}

By summing up \eqref{septal terms cancel 2} from $r = 1,\cdots, R$ and dividing both sides of the resulting inequality by $R$, we obtain
\begin{align} \label{septal terms cancel 3}
&\frac{1}{R}\sum_{r=1}^{R} \mathbb{E}\left[||\nabla F({\bf{w}}_r)|{|^2}\right] \leq \mathbb{E}_{\mathcal{M}}\left[\frac{1}{R}\sum_{r=1}^{R}a_2 \right.\\ \nonumber
& + \frac{1}{R}\sum_{r=1}^{R} \frac{2.4}{\mathrm{q}_r}G^2L^2\bar{\tau}^2\eta_{\mathrm{l}}^2 + \left. \frac{1}{R}\sum_{r=1}^{R} \frac{L\eta_{\mathrm{l}}}{\mathrm{q}_r}\left( L \bar{\tau} \eta_{\mathrm{l}} + \frac{\kappa_{r} \eta_{\rm g}}{M_r}\right) \sigma^2 \right],
\end{align}  
where $\mathbb{E}_{\mathcal{M}}[\cdot]$ takes expectation over the past device scheduling decisions, i.e., $\mathcal{M} = \{\mathcal{M}_r | r = 1, \cdots, R \}$.
Define $\mathrm{q}  \buildrel \Delta \over = \mathop {\min }\limits_r \{ \mathrm{q}_r \}$, $\kappa  \buildrel \Delta \over = \mathop {\max }\limits_r \{\kappa_r\}$, and $M  \buildrel \Delta \over = \mathop {\min }\limits_r \{M_r\}$ with $M \geq 1$, and further rescale the RHS of \eqref{septal terms cancel 3}. We finally obtain
\begin{align}
    &\frac{1}{R}\sum_{r=1}^{R} \mathbb{E}\left[||\nabla F({\bf{w}}_r)|{|^2}\right] \leq \frac{2[F({\bf{w}}_1)-F({\bf{w}}^*)]}{\bar{\tau}\eta_{\rm g}\eta_{\mathrm{l}} \mathrm{q}R} \\ \nonumber
    & + \frac{2.4}{\mathrm{q}}G^2L^2\bar{\tau}^2\eta_{\mathrm{l}}^2 + \frac{L\eta_{\mathrm{l}}}{\mathrm{q}}\left( L \bar{\tau} \eta_{\mathrm{l}} + \frac{\kappa \eta_{\rm g}}{M}\right) \sigma^2 ,
\end{align}
which completes this proof.

\section*{Appendix D \\ Proof of Theorem 3}

For illustration convenience, we suppress the indexes of training rounds and devices, i.e., $r$ and $i$.
The transmission rate $u$ monotonically increases w.r.t $b > 0$, since
\begin{align} \label{derivative}
\frac{\mathrm{d} u}{\mathrm{d} b} =\frac{1}{\ln 2}\left(\ln \left(1+\frac{p h^2}{ b N_0}\right) - \frac{p h^2}{b N_0+p h^2} \right)>0. 
\end{align}

We can prove by contradiction that $t_r^*$ must be equal for all selected devices under the optimal bandwidth allocation.
To see it, we first assume that $\boldsymbol{b}^{'}$ denotes the optimal bandwidth allocation, with $t_{r,i}$ inconsistent across devices, and let the fastest and slowest devices be denoted by $m$, $n$, respectively. Due to synchronous aggregation, the total delay of each round is determined by the slowest device. Then according to \eqref{derivative}, the bandwidth $b_m$ allocated to Device $m$ can be partially reassigned to Device $n$ to increase its transmission rate and in turn decrease the total latency determined by Device $n$. Such an operation can be repeated continuously until a solution $\boldsymbol{b}^*$ of Problem \textbf{P3} is obtained when the latency $t_{r,i}$ of all the selected devices is the same.
The solution $\boldsymbol{b}^*$ clearly leads to a strictly smaller objective in \eqref{obj of P3} than that of ``optimal" $\boldsymbol{b}^{'}$, leading to a contradiction. Having proved that $t_r^*$ is equal, the optimal bandwidth allocation can be obtained by solving the following system of equations:
    \begin{equation}
    \left\{\begin{array}{l}
t_{r, i}^{\mathrm{comp}}+\frac{S}{b_{r,i}^* \log _2\left(1+\frac{p_i h_{r, i}^2}{b_{r,i}^* N_0}\right)}=t_r^*, \forall i \in \mathcal{M}_r \\
\sum_{i \in \mathcal{M}_r} b_{r,i}^* = B .  \\
\end{array}\right.
\end{equation}
By leveraging the definition of Lambert-W function, the theorem readily follows.

\section*{Appendix E \\ Proof of Theorem 4}
Consider the currently selected subset of devices, denoted as $\mathcal{Q}_k$, including $Q$ devices, and the objective function \eqref{obj of P2}. If the introduction of another device $i \in \mathcal{K} \setminus \mathcal{Q}_k$ into $\mathcal{Q}_k$ allows for a further decrease of \eqref{obj of P2}, then we have
\begin{equation}
    \begin{aligned} \label{decline_}
    & \frac{Q+1+\gamma}{(Q+1)^2} \left(\Delta + \frac{1}{\tau_{r,i}}\right) <  \frac{Q+\gamma}{Q^2} \Delta,
\end{aligned}
\end{equation}
where $\Delta = \sum_{q \in \mathcal{Q}_k} \frac{1}{\tau_{r,q}}$. This readily implies \eqref{descent condition}, the proof is thus complete.

\bibliographystyle{IEEEtran}

\bibliography{references}

\end{document}